\pdfoutput=1
\documentclass[pra,aps,twocolumn,twoside,nofootinbib,showkeys,showpacs]{revtex4}
\usepackage{float}
\usepackage{indentfirst}
\usepackage{url}
\usepackage[stable]{footmisc}
\usepackage{fancyhdr}
\usepackage{appendix}
\usepackage{epstopdf}
\usepackage[all]{xy}
\usepackage{ifpdf}
\usepackage{color}
\ifpdf
 \usepackage[pdftex]{graphicx}
 \else
\usepackage{graphicx}
 \fi
\usepackage{graphicx}
\usepackage{youngtab}
\usepackage[T1]{fontenc}
\usepackage{textcomp}
\usepackage{newcent}
\usepackage[sc,noBBpl]{mathpazo}
\usepackage{amsmath,amsfonts,amssymb,amsthm}
\usepackage[mathscr]{eucal}
\theoremstyle{plain}
\newtheorem{theorem}{Theorem}
\newtheorem{lemma}[theorem]{Lemma}

\newtheorem{corollary}[theorem]{Corollary}
\newtheorem{conjecture}[theorem]{Conjecture}
\newtheorem{definition}[theorem]{Definition}
\newtheorem{fact}[theorem]{Fact}
\newtheorem{notation}[theorem]{Notation}
\newtheoremstyle{note}{\topsep}{\topsep}{\slshape}{}{\scshape}{}{ }{}
\theoremstyle{note}

\newtheorem{remark}[theorem]{Remark}
\numberwithin{equation}{section}
%
%




%

%
%

%

%
%
%

%

%
%

%
%

%


%

%
%
%
%
\newcommand\field{\mathbb}

\newcommand\II{\field{I}}

\newcommand\tr{\operatorname{Tr}}



%

%
%

%
%

%
%

%
%

%


\renewcommand{\>}{\rangle}


\newcommand\be{\begin{equation}}
\newcommand\ee{\end{equation}}
\newcommand\bea{\begin{array}}
\newcommand\eea{\end{array}}
\newcommand\ben{\begin{eqnarray}}
\newcommand\een{\end{eqnarray}}
\newcommand\ot{\otimes}

\newcommand\bei{\begin{itemize}}
\newcommand\eei{\end{itemize}}
\newcommand\bee{\begin{enumerate}}
\newcommand\eee{\end{enumerate}}

\newcommand{\poly}{\operatorname{poly}}

\begin{document}
\title{Local random quantum circuits are approximate polynomial-designs - numerical results}

\author{Piotr \'Cwikli\'nski$^{1}$,
Micha{\l} Horodecki$^{2,4}$, Marek Mozrzymas$^{3}$, \L{}ukasz Pankowski$^{2,4}$, and Micha{\l} Studzi\'nski$^{2,4}$}
\affiliation{
$^1$ Institute for Quantum Information, RWTH Aachen University, D-52056 Aachen, Germany \\
$^2$ Institute of Theoretical Physics and Astrophysics, University of Gda\'nsk, 80-952 Gda\'nsk, Poland \\
$^3$ Institute for Theoretical Physics, University of Wroc\l{}aw, 50-204 Wroc\l{}aw, Poland \\
$^4$ National Quantum Information Centre of Gda\'nsk, 81-824 Sopot, Poland
}

\date{\today}

\begin{abstract}
We numerically investigate the statement that local random quantum circuits acting on $n$ qubits composed of polynomially many nearest neighbour two-qubit gates form an approximate unitary $\poly(n)$-design [F.G.S.L. Brand\~ao et al., arXiv:1208.0692]. Using a group theory formalism, spectral gaps that give a ratio of convergence to a given $t$-design are evaluated for a different number of qubits $n$ (up to $20$) and degrees $t$ ($t=2,3,4$ and $5$), improving previously known results for $n=2$ in the case of $t=2$ and $3$. Their values lead to a conclusion that the previously used lower bound that bounds spectral gaps values may give very little information about the real situation and in most cases, only tells that a gap is closed. We compare our results to the another lower bounding technique, again showing that its results may not be tight. 
\end{abstract}

\pacs{03.67.-a, 03.65.Fd}
\keywords{quantum circuits; $t$ - designs; qubit; representation theory}

\maketitle

\section{Introduction}
\label{sec:introduction}
Random unitary matrices have established their place as a useful and powerful tool in theory of quantum information and computation. For example, they are used in the encoding protocols for sending information
down a quantum channel \cite{Abeyesinghe2009-Haar}, approximate encryption of quantum information \cite{Hayden2004-Haar}, quantum datahiding \cite{Hayden2004-Haar}, information locking \cite{Hayden2004-Haar}, process tomography \cite{Emerson2005-unitary}, state distinguishability \cite{Sen-Haar}, and equilibration of quantum states \cite{Brandao2011-thermo, Masanes2011-thermo,Vinayak2011-thermo,Cramer2011-thermo} or some other problems in foundation of statistical mechanics \cite{Zanardi-stat2010}.

But their is a problem with them - they are not very favorable from a computational point of view. Why? The answer is the following - to implement a random Haar unitary one needs an exponential number of two-qubit gates and random bits (in other words, to sample from the Haar measure with error $\varepsilon$, one needs $exp(4^n \log(\frac{1}{\varepsilon}))$ different unitaries). To omit this problem, one can construct, so-called, approximate random unitaries or pseudo-random unitaries. Using them, an efficient implementation is possible.

An approximate unitary $t$-design is a distribution of unitaries which mimic properties of the
Haar measure for polynomials of degree up to $t$ (in the entries of the unitaries) \cite{Dankert2009-Haar, Gross2007-Haar, Emerson2005-Haar, Oliveira2007-Haar, Dahlsten2007-Haar, Harrow2009-tdesign, Diniz2011-tdesign, Arnaud2008-tdesign, Znidaric2008-tdesign, Roy2009-tdesign, Brown2010-tdesign, Brandao2010-tdesign, Hayden2007-tdesign, Harrow2009a-tdesign}. Approximate designs have a number of interesting applications in quantum information theory replacing
the use of truly random unitaries (see e.g. \cite{Oliveira2007-Haar, Dahlsten2007-Haar, Hayden2007-tdesign, Brandao2010-tdesign, Emerson2003-tdesign, Low2009-tdesign}). What is more, other particular constructions of approximate unitary $t$-designs together with some applications in quantum physics, have been formulated, let us mention here, for example, a recent construction of diagonal unitary $t$-designs \cite{Nakata2012-tdesigns1, Nakata2012-tdesigns2}.

In 2009, Harrow and Low \cite{Harrow2009-tdesign} stated a conjecture that polynomial sized random quantum circuits acting on $n$ qubits
form an approximate unitary $\poly(n)$-design. To give an example supporting their statement, the authors, also in 2009, presented efficient constructions of quantum $t$-designs, using a polynomial number
of quantum gates and random bits, for $t = O(\log(n))$ \cite{Harrow2009a-tdesign}.

However, it took some time to verify the conjecture from \cite{Harrow2009-tdesign}. At that time, it was already known that there exist efficient approximate unitary $2$-designs in $\mathbb{U}(d^n)$, where efficient means that unitaries are created by polynomial (in $n$) number of two-qubit gates and the distribution of unitares can also be sampled in polynomial time (in other words that random circuits are approximate unitary $2$-design) \cite{Oliveira2007-Haar, Dahlsten2007-Haar, Harrow2009-tdesign, Diniz2011-tdesign, Arnaud2008-tdesign, Znidaric2008-tdesign, DiVincenzo-Clifford}. In 2010, Brand\~ao and Horodecki \cite{Brandao2010-tdesign}, made one step further proving that polynomial random quantum circuits are approximate unitary $3$-designs. Quite recently, a break-through has been made for the above problem. In \cite{Brandao-Harrow-Horodecki}, authors has proved that local random quantum circuits acting on $n$ qubits composed of polynomially many nearest neighbor two-qubit gates form an
approximate unitary $\poly(n)$-design, settling the conjecture from \cite{Harrow2009-tdesign} to be affirmative. Their proof is based on
techniques from many-body physics, representation theory and combinatorics. In particular, one of the tool used to obtain the main result was
estimation of the spectral gap of the frustration-free local quantum Hamiltonian that can be used to study the problem (instead of a quantum
circuit).

In this paper, we analyze two aspects of their statement: First, we numerically verify and investigate it, calculating spectral gaps for
the increasing number of qubits (in circuit) $n$ (up to $n=20$, in the best case) and degree $t$ ($t=2,3,4,5$). Previously, exact values were known only for
$n=2$ and $t=2,3$. This may be of independent interest since, in many-body physics, the knowledge of spectral gaps of local Hamiltonians is
useful in studying many-body systems (see, for example, \cite{Schwarz2012-PEPS, Hastings2007-area, Nach2006-cluster, Brandao2011-thermo}). In addition, in \cite{Brandao2010-tdesign}, the authors obtained that for $n=2$, gap values for $t=2$ and $3$ are the same; we obtain the matching of all calculated spectral gap values in the case of $t=2$ and $3$, which is a unique feature for this exact degrees $t$. Second, in \cite{Brandao-Harrow-Horodecki}, in order to proof
the main statement, a lower bound for spectral gaps was derived (independent from the number of qubits).
However, based on our results, it can be concluded that
lower bounding of spectral gaps may not be tight. We show that there could be a large difference in actual values of spectral gaps and corresponding
lower bounds. What is more, our calculations for increasing values of $n$ and $t$ lead to a conclusion that first, the lower bound from \cite{Brandao-Harrow-Horodecki} is hard to estimate, second, it gives only the limited information about the actual value of spectral gaps (that a gap is closed).
Simultaneously, we also compare our results to the another lower bound, derived in \cite{Knabe88}, and show that predictions obtained according to it can be better than that from \cite{Brandao-Harrow-Horodecki}. Nevertheless, also in this case, predictions
according to that bound can be inconclusive in some cases, showing that, in principle, obtaining a good bound is a demanding task.

Our paper in structured as follows. In Section \ref{Sec:2} we start from introducing local (random) quantum circuits which, using the formalism of superoperators, we connect with approximate unitary $t$-designs. In Section \ref{Sec:3} we recall the statement that local quantum circuits of a given length form approximate unitary $t$-designs. We also recall that to verify and check this connection, calculations of spectral gaps and second largest eigenvalues of local Hamiltonians connected to the problem are necessary. Section \ref{Sec:num} is the key section of this paper. We first show, how to connect our problem of checking the connection between local quantum circuits and $t$-designs (calculating the spectral gaps) with symmetric groups
$S(t)$, where $t$ corresponds to a degree in $t$-design. Then, we present our numerical calculations for spectral gaps for different number of qubits $n$ and different degrees $t$ which we later compare to results obtained using the techniques for lower bounding the spectral gaps.

\section{Random unitary circuits and approximate $t$-designs}
\label{Sec:2}
In this section, we present the formalism of local random quantum circuits and approximate unitary designs. We want to point here, that Sec. \ref{Sec:2} and \ref{Sec:3} are mainly based on \cite{Brandao2010-tdesign, Brandao-Harrow-Horodecki}, so for the full analysis (proofs, etc.), we refer to these papers.

We consider $n$ qubits (so, from now $d=2$), and apply $l$ steps of a random circuit (random walks on $\mathbb{U}(d^n)$.
\begin{definition}
\textbf{(Local quantum circuit)} In each step of the walk an index $i$ is chosen uniformly at random from $[n]$ and then a unitary $U_{i, i+1}$ drawn from the Haar measure on $\mathbb{U}(d^2)$ is applied to the two neighboring qudits $i$ and $i+1$.
\end{definition}

There are several different definitions of $\varepsilon-$approximate unitary $t$-designs \cite{Low2010-PhD} from which let us mention the following.
\begin{definition}
\label{def2}
\textbf{(Approximate unitary t-design)} Let $\left\{ \mu, U \right\}$ be an ensemble of unitary operators from $\mathbb{U}(2^n)$. Define
\be {\cal G}_{\mu, t}(\rho) = \int_{\mathbb{U}(d)} U^{\ot t} \rho (U^{\dag})^{\ot t} \mu(dU) \ee
and
\be {\cal G}_{H, t}(\rho) = \int_{\mathbb{U}(d)} U^{\ot t} \rho (U^{\dag})^{\ot t} \mu_H(dU), \ee
where $\mu_H$ is the Haar measure. Then the ensemble is a $\varepsilon$-approximate unitary $t$-design if
\be ||{\cal G}_{\mu, t} - {\cal G}_{H, t}||_{2 \rightarrow 2} \leq \varepsilon, \ee
where the induced Schatten norm
$||\Lambda(\operatorname{X})||_{p\rightarrow q}=\mathop{\operatorname{sup}}\limits_{\operatorname{X}\neq 0}\frac{||\Lambda(\operatorname{X})||_p}{||\operatorname{X}||_q}$ is used.
\end{definition}


\section{Local random quantum circuits are approximate polynomial-designs}
\label{Sec:3}
In this section we will review some basis facts about local random circuits. At the end, we recall the statement that local quantum circuits of a given length form approximate unitary $t$-designs.


Let $\mu$ be a measure on $\mathbb{U}(2^n)$ induced by one step of the local random circuit model and $\mu^{\star l}$ measure induced by $l$ steps on such a model, then one can show that (having in mind that for a superoperator $\cal G$ and an operator $G$ that have the same set of eigenvalues holds $||{\cal G}||_{2 \rightarrow 2} = ||G||_{\infty}$, see Appendix \ref{App:C})

\begin{theorem} \cite{Brandao-Harrow-Horodecki} \label{th}
\be ||G_{\mu^{\star l},t}-G_{\mu_H,t}||_{\infty} = \lambda_2(\int_{\mathbb{U}(d)} U^{\ot t} \ot (U^{\star})^{\ot t} \mu(dU))^l \label{spec} \ee
where $G_{\mu^{\star l}, t} = \int_{\mathbb{U}(d)} U^{\ot t} \ot (U^{\star})^{\ot t} \mu^{\star l}(dU)$ and $\lambda_2$ stands for the second largest eigenvalue of $G_{\mu,t}$. But
$\mu = \frac{1}{n}\sum_{i=1}^{n-1} \mu_{H}(i, i+1)$, so
\be \lambda_2(\int_{\mathbb{U}(d)} U^{\ot t} \ot (U^{\star})^{\ot t} \mu(dU)) = 1 -  \frac{\Delta(H_{n, t})}{n}, \ee
with $H_{n, t} = \sum_{i=1}^{n-1} h_{i,i+1}$, with local terms $h_{i,i+1} = I - \int_{\mathbb{U}(d)} U_{i,i+1}^{\ot t} \ot (U_{i,i+1}^{\star})^{\ot t} \mu_{H}(dU)$ and $\Delta(H_{n, t})$ the spectral gap of the local Hamiltonian $H_{n,t}$ (see, Fig. \ref{fig:H}).
\end{theorem}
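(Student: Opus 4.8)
The plan is to reduce the two displayed identities to standard facts about moment (super)operators on the unitary group, handling the convolution structure first and the Hamiltonian reformulation second.

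First I would record the multiplicativity of the moment operator under convolution. For any measure $\nu$ write $G_{\nu,t}=\int_{\mathbb{U}(d)} U^{\ot t}\ot(U^{\star})^{\ot t}\,\nu(dU)$. The identity $(UV)^{\ot t}=U^{\ot t}V^{\ot t}$ (and the analogous one for the conjugates) together with Fubini gives $G_{\mu\star\nu,t}=G_{\mu,t}\,G_{\nu,t}$, hence $G_{\mu^{\star l},t}=(G_{\mu,t})^{l}$, which is exactly the left-hand operator appearing in \eqref{spec}.

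Next I would identify $G_{\mu_H,t}$ as the orthogonal projector onto the eigenvalue-one eigenspace of $G_{\mu,t}$. Since $\mu_H\star\mu_H=\mu_H$ and $\mu\star\mu_H=\mu_H\star\mu=\mu_H$, the operator $G_{\mu_H,t}$ is idempotent and satisfies $G_{\mu,t}\,G_{\mu_H,t}=G_{\mu_H,t}\,G_{\mu,t}=G_{\mu_H,t}$, so its range is contained in the eigenvalue-one eigenspace of $G_{\mu,t}$. Because $\mu$ is invariant under $U\mapsto U^{-1}$ (the local gates are Haar distributed), $G_{\mu,t}$ is Hermitian, and being the average of the identity with the local projectors $P_i=\int U_{i,i+1}^{\ot t}\ot(U_{i,i+1}^{\star})^{\ot t}\mu_H(dU)$ it is positive semidefinite with spectrum in $[0,1]$. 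The essential input, which I would import from the representation-theoretic analysis, is that the common fixed space of the nearest-neighbour gates equals the fixed space of the whole group $\mathbb{U}(2^{n})$ acting by $U^{\ot t}\ot(U^{\star})^{\ot t}$; granting this, $\mathrm{range}(G_{\mu_H,t})$ coincides with the full eigenvalue-one eigenspace, so $G_{\mu_H,t}$ is precisely the spectral projector at $1$. On its orthogonal complement $G_{\mu,t}$ has norm $\lambda_2(G_{\mu,t})$, and since $(G_{\mu,t})^{l}-G_{\mu_H,t}$ vanishes on $\mathrm{range}(G_{\mu_H,t})$ and agrees with $(G_{\mu,t})^{l}$ on the complement, Hermiticity (norm equals spectral radius) together with the multiplicativity above yields $\|(G_{\mu,t})^{l}-G_{\mu_H,t}\|_{\infty}=\lambda_2(G_{\mu,t})^{l}$.

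For the second identity I would exploit the affine relation between $G_{\mu,t}$ and the frustration-free Hamiltonian. With $h_{i,i+1}=I-P_i$ and $H_{n,t}=\sum_{i=1}^{n-1}h_{i,i+1}$, I would note that choosing $i=n$ in the walk applies no gate (there is no site $n+1$), so that index contributes the identity with weight $1/n$; collecting terms gives $G_{\mu,t}=\tfrac1n\bigl(I+\sum_{i=1}^{n-1}P_i\bigr)=I-\tfrac1n H_{n,t}$. The spectra are therefore affinely related with negative slope $-1/n$: the top eigenvalue $1$ of $G_{\mu,t}$ corresponds to the ground space of $H_{n,t}$ (energy $0$, as each $h_{i,i+1}\ge 0$ and the model is frustration-free), while the second largest eigenvalue of $G_{\mu,t}$ corresponds to the first excited energy, i.e.\ the spectral gap $\Delta(H_{n,t})$, giving $\lambda_2(G_{\mu,t})=1-\Delta(H_{n,t})/n$. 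The main obstacle is the universality step used above: one must show that iterating the local nearest-neighbour Haar gates leaves invariant no vector outside the global invariant subspace, equivalently that $\mathrm{range}(G_{\mu_H,t})=\bigcap_i\mathrm{range}(P_i)$ and $\lambda_2(G_{\mu,t})<1$ strictly; everything else is bookkeeping with convolutions and an affine change of variables.
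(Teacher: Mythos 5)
Your proposal is correct and matches the paper's own treatment in all essentials: both arguments rest on the multiplicativity $G_{\mu^{\star l},t}=(G_{\mu,t})^{l}$, on identifying $G_{\mu_H,t}$ as the spectral projector onto the eigenvalue-one eigenspace of the Hermitian operator $G_{\mu,t}$ (the paper gets this from the asserted limit $\lim_{l\to\infty}G_{\mu^{\star l},t}=G_{\mu_H,t}$ in Appendix~\ref{App:B}, while you get it from the absorption identity $\mu\star\mu_H=\mu_H$ together with the fixed-space universality that you correctly flag as the one nontrivial input), and on the affine relation $G_{\mu,t}=I-H_{n,t}/n$ linking $\lambda_2$ to the spectral gap. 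There is no substantive difference in approach.
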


\begin{figure}[h]
	\includegraphics[width=0.45\textwidth]{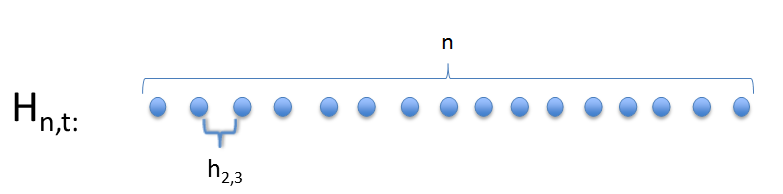}
	\label{fig:H}
	\caption{\textbf{(Color online) A local Hamiltonian acting on $n$ subsystems $H_{n, t} = \sum_{i=1}^{n-1} h_{i,i+1}$, each of dimension $d^{2t}$. Its spectral gap value $\Delta(H_{n, t})$ can be connected with the second largest eigenvalue of the operator $G_{\mu,t}$.}}
\end{figure}

After a successful estimation of the spectral gap from Eq. \eqref{spec}, one can show that
\begin{theorem} \cite{Brandao-Harrow-Horodecki} \label{MH}
Local random circuits of size $(\log\left(\frac{1}{\varepsilon}\right)2\log(d)\log(t))t^5n^2$ form an $\varepsilon$-approximate unitary $t$-design.
\end{theorem}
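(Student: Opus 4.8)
The plan is to treat Theorem~\ref{th} as the engine and reduce the entire design guarantee to a single lower bound on the spectral gap $\Delta(H_{n,t})$. Combining the identity of Theorem~\ref{th} with the value of $\lambda_2$ given there yields
\[
\|G_{\mu^{\star l},t}-G_{\mu_H,t}\|_\infty=\Big(1-\tfrac{\Delta(H_{n,t})}{n}\Big)^{l},
\]
and, via the observation recorded in Appendix~\ref{App:C} that $\|\mathcal{G}\|_{2\to2}=\|G\|_\infty$ for the superoperator and its associated operator, the left-hand side is exactly the $2\to2$ distance controlled in Definition~\ref{def2}. First I would fix the accuracy threshold carefully: to certify an operationally meaningful (strong) $\varepsilon$-approximate $t$-design one must control not the bare $2\to2$ norm but the induced norm on the full $nt$-qudit space, and this conversion costs a dimension factor of order $d^{nt}$. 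It therefore suffices to drive $\lambda_2^{\,l}=(1-\Delta/n)^{l}$ below $\varepsilon\,d^{-nt}$. This is also what forces the quadratic, rather than linear, dependence on $n$.

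Second, I would convert this into a bound on the circuit length by the elementary estimate $(1-\Delta/n)^{l}\le e^{-l\Delta/n}$, so that it suffices to take
\[
l\ \ge\ \frac{n}{\Delta(H_{n,t})}\big(nt\log d+\log(1/\varepsilon)\big).
\]
At this point the whole theorem hinges on one quantity: a lower bound on $\Delta(H_{n,t})$ that is independent of $n$. Indeed, inserting a bound of the form $\Delta(H_{n,t})\ge 1/\poly(t)$ and collecting the $n/\Delta$ factor, the threshold $nt\log d$, and the $\log(1/\varepsilon)$ term immediately reproduces the advertised order $t^5 n^2\log d\,\log(1/\varepsilon)$, up to the logarithmic factor in $t$ and a loose merging of the additive terms into the stated product.

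The hard part, and the entire content beyond Theorem~\ref{th}, is thus the $n$-independent gap lower bound for the frustration-free chain $H_{n,t}=\sum_i h_{i,i+1}$, where each local term $h_{i,i+1}=I-P_{i,i+1}$ is one minus the projector $P_{i,i+1}$ onto the commutant (invariant subspace) of $U^{\otimes t}\otimes(U^\star)^{\otimes t}$ on the pair $(i,i+1)$, which by Schur--Weyl duality is spanned by the permutation operators of $S(t)$. I would attack this in two stages. In the local stage I would identify the common kernel of the $h_{i,i+1}$ and, using the representation theory of $S(t)$, compute the gap of a fixed-size block of two or three sites; here the moment operator's spectrum is governed by the permutation operators and the relevant eigenvalues reduce to those of an explicit combinatorial object, uniform in $n$. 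In the amplification stage I would boost this constant block gap to a gap for the whole chain that decays at most polynomially in $n$, using a martingale/Nachtergaele-type argument (or the finite-size criterion of Knabe~\cite{Knabe88}) together with the detectability lemma to control how the overlaps between neighbouring projectors accumulate along the line.

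Finally I would assemble the pieces: the amplified gap bound feeds into the length estimate of the second step, and collecting the polynomial-in-$t$ loss from the local spectral computation with the $n^2$ arising from the $n/\Delta$ factor and the dimensional threshold gives the stated size. I expect the principal obstacle to be precisely the two stages of the gap bound: the local computation must be genuinely uniform in the number of qudits, and the local-to-global amplification must lose no more than a polynomial factor in $n$, since any super-polynomial loss would destroy the efficiency claim. Everything else is the routine exponential-to-linear conversion of the convergence rate.
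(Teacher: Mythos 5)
Your overall architecture is the right one and is essentially the route of \cite{Brandao-Harrow-Horodecki}, which is all that is behind Theorem~\ref{MH} here (the paper quotes it rather than reproving it): reduce to the second eigenvalue of $G_{\mu,t}$ via Theorem~\ref{th}, pass from $(1-\Delta/n)^{l}$ to $e^{-l\Delta/n}$, pay a dimension factor $d^{O(nt)}$ to upgrade the operator-norm bound of Fact~\ref{fact:Ginf} to the stronger notion of design used in the cited theorem (you correctly identify that this conversion, not the mixing itself, is what produces the $n^{2}$ and the $2\log(d)$ in the stated size), and feed in an $n$-independent lower bound on $\Delta(H_{n,t})$. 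The gap in your proposal is that the one lemma carrying all the weight is left unproved, and the part of it you do commit to is set up in a way that would fail.

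Concretely: (i) your amplification stage starts from ``a fixed-size block of two or three sites.'' A base case of constant size cannot work uniformly in $t$: the martingale/Nachtergaele condition (and likewise Knabe's criterion \eqref{boundKnabe}) requires the ground spaces of neighbouring blocks to overlap in a controlled way, and since each local ground space is spanned by the $t!$ permutation operators sitting inside a space of dimension $d^{2kt}$, the relevant Gram matrices are only well-conditioned once the block length $k$ grows like $\log_d t$. This is precisely why the ``global'' bound \eqref{boundMH} reduces $\Delta(H_{n,t})$ to $\Delta(H_{\lceil 2(\log d)^{-1}\log t\rceil,\,t})$ and requires $n\geq\lceil 10\log t\rceil$; the $\log t$ scaling of the base case is structural, not a convenience. (ii) For that $O(\log t)$-site base case you still need $\Delta\geq 1/\poly(t)$, and ``the eigenvalues reduce to an explicit combinatorial object'' does not deliver it: Schur--Weyl identifies the kernel of $H_{k,t}$, not its gap, and the operator lives on a space of dimension $d^{O(t\log t)}$. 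In \cite{Brandao-Harrow-Horodecki} this step is a separate and substantial argument (a Wasserstein/path-coupling contraction estimate for random circuits on $O(\log t)$ qudits, converted into a bound on the $t$-th moment operator at a polynomial-in-$t$ cost), and it is where the $t^{5}$ actually originates. Without (i) corrected and (ii) supplied, your argument establishes only the reduction already contained in Theorem~\ref{th}.
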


Thus the problem reduces to analysis of spectral gap of the operator $G_{\mu,t}$. Now, it is important to ask the following:
\begin{itemize}
	\item How a spectral gap depends on number of qubits $n$ and the degree of design $t$?
\end{itemize}
In the next section, an answer to this question is provided.

\section{Spectral gaps - numerical results}
\label{Sec:num}
In this section, we present our numerical results for spectral gaps for different degrees $t$ in  $t$-designs (for simplicity, we gather all results in Table \ref{tab-gap}). What is more, we (where it was possible) compare our results with two lower bounds for spectral gaps: $1)$ The "local" lower bound obtained in \cite{Knabe88}; $2)$ The "global" lower bound derived in \cite{Brandao-Harrow-Horodecki}. To evaluate the spectral gaps, the $Matlab$ software and a $C++$ code have been used.

\subsection{Local Hamiltonians as a tool for calculating spectral gaps}
We already showed that calculations of spectral gaps can be connected with the second largest eigenvalue of $G_{\mu,t}$. Now, we will show how to connect calculations of spectral gaps (equivalently, second largest eigenvalues) with symmetric groups $S(t)$, where $t$ plays a role of the degree of $t$-design; using techniques for local Hamiltonians introduced in Sec. \ref{Sec:3}.

At the beginning, let us remind that our Hamiltonian is of the form
$H_{n, t} = \sum_{i=1}^{n-1} h_{i,i+1}$, with local terms $h_{i,i+1} = I - P_{i,i+1}$, and the notation as in Sec. \ref{Sec:3}

Let us consider superoperators associated with projectors $P_{i,i+1}$
\be
\label{eq:Pii}
\mathcal{P}_{i,i+1}(X) = \int_{\mathbb{U}(d)} U_{i,i+1}^{\ot t} X (U_{i,i+1}^{\star})^{\ot t} \mu_{H}(dU)
\ee

Now, we can find, as a consequence of the Schur-Weyl duality~\cite{Weyl} that all operators $X$ invariant under action of $\mathcal{P}_{i,i+1}$ can be written
as a sum of permutation operators $V_{\pi,(i,i+1)}$, permuting $t$ copies of the Hilbert space $\mathcal{H}^{\ot t}_{i,i+1}$.
We know that $V_{\pi,(i,i+1)}=V_{\pi,i} \ot V_{\pi,i+1}$, where $V_{\pi,i}$ is the operator representing some permutation $\pi \in S(t)$ acting on $H_i^{\ot t}$. Hence $P_{i,i+1} \subset P_i \ot P_{i+1}$, where operators $P_i$ are given by the expression
\be
\label{Pi}
P_i=\int_{\mathbb{U}(d)} U_{i}^{\ot t}\ot (U_{i}^{\star})^{\ot t} \mu_{H}(dU).
\ee
Thanks to the above consideration we can deduce that operator $X$ can be identify with the operator $G_{\mu,t}$ from Theorem \ref{th} and written according to the formula
\be
\label{x}
\begin{split}
G_{\mu,t} &= P_{1,2} \ot I_3 \ot \ldots \ot I_n + I_1 \ot P_{2,3} \ot I_4 \ot \ldots \ot I_n + \\
&+\ldots + I_1 \ot I_2 \ot \ldots \ot P_{n-1,n}.
\end{split}
\ee
In a subspace spanned by permutation operators acting on Hilbert space $\mathcal{H}^{\ot t}$ we are able to construct operator basis which is orthogonal in the Hilbert-Schmidt scalar product (see Appendix \ref{App:A}). Using this basis we can calculate two biggest eigenvalues of operators $X$ which are necessary to know the spectral gap.

\subsection{Lower bounding a spectral gap}
Herewith, we present two methods for lower bounding the spectral gaps that we will use later to compare to values of spectral gaps obtained numerically.

1. "Global" bound:
\begin{lemma} \cite{Brandao-Harrow-Horodecki}  For every integers $n$ (number of qubits) and $t$ ($t$ in $t$-design), with $n \geq \left\lceil 10 \log(t)\right\rceil$, the spectral gap $\Delta(H_{n,t})$, can be lower bounded as follows:
\be \label{boundMH}
\Delta(H_{n,t}) \geq \frac{\Delta(H_{\left\lceil 2 (\log(d))^{-1} \log(t)\right\rceil,t})}{8(\log(d))^{-1}\log(t)},
\ee
with $d$ being the dimension of the Hilbert space and $\left\lceil a \right\rceil$ denotes a smallest integer $k$ satisfying $k \geq a$.
\end{lemma}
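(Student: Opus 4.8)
The plan is to treat $H_{n,t}=\sum_{i=1}^{n-1}h_{i,i+1}$ as a frustration-free, nearest-neighbour Hamiltonian on a one-dimensional chain and to lower bound its spectral gap by that of a fixed-size block Hamiltonian via a coarse-graining (Nachtergaele-type martingale) argument. Write $m:=\lceil 2(\log d)^{-1}\log t\rceil$. I would first record the two structural facts I will lean on: each $h_{i,i+1}=I-P_{i,i+1}$ is a projector, and the common ground space of all the $P_{i,i+1}$ (energy zero, since the system is frustration-free) is, by Schur--Weyl duality, spanned by the global permutation operators $V_{\pi}$, $\pi\in S(t)$. The whole point is to pass from the local datum $\Delta(H_{m,t})$ to a global bound whose prefactor does not depend on $n$.

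First I would cover the chain with sliding windows $B_j=\{j,j+1,\dots,j+m-1\}$ of $m$ consecutive sites. Because the windows overlap heavily, each edge $(i,i+1)$ is contained in at most $m$ of them. On each window the internal terms form a copy of $H_{m,t}$, so by the definition of its gap one has the operator inequality $\sum_{(i,i+1)\subset B_j}h_{i,i+1}\geq \Delta(H_{m,t})\,(I-\Pi_{B_j})$, where $\Pi_{B_j}$ projects onto that window's ground space. Summing over $j$ and using the bounded overlap multiplicity $k\leq m$ gives $m\,H_{n,t}\geq \Delta(H_{m,t})\sum_j(I-\Pi_{B_j})$, that is $H_{n,t}\geq \frac{\Delta(H_{m,t})}{m}\,\widetilde H$ with the coarse-grained operator $\widetilde H:=\sum_j(I-\Pi_{B_j})$. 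Thus it remains to show that this still-nearest-neighbour frustration-free chain $\widetilde H$ has a gap bounded below by an absolute constant; the factor $m$ in the denominator is precisely this overlap multiplicity, and $8(\log d)^{-1}\log t\approx 4m$ absorbs it together with the constant produced by the next step.

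To bound $\Delta(\widetilde H)$ I would invoke the martingale/gluing lemma for frustration-free chains: if the ground-space projectors of neighbouring windows satisfy an overlap condition of the form $\|\Pi_{B_j}\Pi_{B_{j+1}}-\Pi_{B_j\cup B_{j+1}}\|\leq \epsilon$ with a fixed $\epsilon<1$ (here $\Pi_{B_j\cup B_{j+1}}$ projects onto their common ground space), then $\widetilde H$ has a gap bounded below by a constant of order $(1-\epsilon)^2$, say $1/4$. This is exactly where the threshold $m\geq 2(\log d)^{-1}\log t$, equivalently $d^{m}\geq t^{2}$, enters: the window ground spaces are spanned by permutation operators acting on $m$ sites, and when $d^m\geq t^2$ the Gram matrix of these operators (in the Hilbert--Schmidt inner product) is well conditioned, close to a scalar multiple of the identity, which forces the overlap of adjacent window ground spaces outside their common part to be bounded away from $1$. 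Combining the multiplicity $k\leq m$, the constant from the martingale lemma, and the value of $m$ yields the stated denominator; the hypothesis $n\geq\lceil 10\log t\rceil$ merely guarantees the chain is long enough to contain the sliding windows so the recursion is non-vacuous.

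The main obstacle is the overlap estimate $\|\Pi_{B_j}\Pi_{B_{j+1}}-\Pi_{B_j\cup B_{j+1}}\|\leq\epsilon$ with an $n$-independent $\epsilon<1$. Everything else is bookkeeping, but this step is genuinely quantitative: it requires controlling how the permutation-operator bases on two overlapping windows combine, and showing that $d^m\geq t^2$ makes the relevant second singular value at most, say, $1/2$. I expect this to reduce to a spectral estimate on the Gram matrix of the $V_\pi$ on $m$ sites, so the representation-theoretic input (Schur--Weyl together with the near-orthogonality of the permutation operators once $d^m\geq t^2$) is where the real work lies.
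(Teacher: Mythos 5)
You should know at the outset that the paper you are being compared against does not prove this lemma: it is imported verbatim from \cite{Brandao-Harrow-Horodecki} and used as a black box, so the only meaningful comparison is with the proof in that reference. Measured against it, your outline follows essentially the right strategy. The reference does reduce the $n$-site gap to the gap of a block of size $m=\lceil 2(\log d)^{-1}\log t\rceil$ via Nachtergaele's martingale method for frustration-free nearest-neighbour chains, the denominator $8(\log d)^{-1}\log t\approx 4m$ does arise as the product of an overlap multiplicity with a martingale constant, and the threshold $d^{m}\ge t^{2}$ enters exactly where you say it does, through the conditioning of the Gram matrix of the permutation operators $V_{\pi}$ that span the window ground spaces (one has $\sum_{\tau\neq e}(d^{m})^{c(\tau)-t}\le (t^{2}/d^{m})/(1-t^{2}/d^{m})$, so the normalized Gram matrix is close to the identity precisely when $d^{m}$ beats $t^{2}$).

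That said, as a proof your text has a genuine gap, and you name it yourself: the overlap estimate $\|\Pi_{B_{j}}\Pi_{B_{j+1}}-\Pi_{B_{j}\cup B_{j+1}}\|\le\epsilon$ with an $n$-independent $\epsilon$ bounded away from $1$ is asserted, not established. This estimate is the entire content of the lemma. Everything else you write --- the covering by sliding windows, the multiplicity bound $k\le m$, the operator inequality $m\,H_{n,t}\ge\Delta(H_{m,t})\sum_{j}(I-\Pi_{B_{j}})$ --- is soft and would apply verbatim to any frustration-free nearest-neighbour chain, including gapless ones; so without the quantitative overlap bound the argument proves nothing. Closing it requires (i) identifying $\Pi_{B_{j}}$ as the projector onto $\operatorname{span}\{V_{\pi}\colon\pi\in S(t)\}$ on $m$ sites via Schur--Weyl, (ii) the Gram-matrix estimate above, and (iii) a computation converting near-orthogonality of the two spanning sets into a bound on the relevant singular value of $\Pi_{B_{j}}\Pi_{B_{j+1}}$ restricted to the orthocomplement of the joint ground space, below the threshold demanded by the martingale lemma. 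You would also need to track constants carefully enough to land on the specific prefactor $8(\log d)^{-1}\log t$ rather than merely $O(\log t)$, and to check that $n\ge\lceil 10\log t\rceil$ suffices for the window construction. As it stands you have a correct and well-aimed plan, not a proof.
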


2. "Local" bound:
\begin{lemma} \cite{Knabe88} For every integers $n$ and $t$, the spectral gap $\Delta(H_{n,t})$ of local Hamiltonian can be lower bounded in the following way:
\be \label{boundKnabe}
\Delta(H_{n,t}) \geq \frac{k \Delta(H_{k,t}) - 1}{k - 1},
\ee
where $\Delta(H_{k,t})$ is the Hamiltonian restricted to $k$ qubits: $H_k = \sum_{i=1}^{k} h_{i,i+1}$.
\end{lemma}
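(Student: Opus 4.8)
The starting point is that every local term $h_{i,i+1}=I-P_{i,i+1}$ is an orthogonal projector: the twirl $P_{i,i+1}$ in Eq.~\eqref{eq:Pii} is idempotent and Hermitian, so $h_{i,i+1}^{2}=h_{i,i+1}\ge 0$, the chain is frustration-free with ground energy $0$, and, since each bond carries the identical local term, any window of a fixed number of consecutive bonds is a copy of the corresponding restricted Hamiltonian. The plan is to reduce the gap estimate to a single operator inequality. For a positive semidefinite $A$ whose spectrum contains $0$, the spectral gap is the smallest nonzero eigenvalue, and $A^{2}\ge\gamma A$ holds as operators if and only if every nonzero eigenvalue is $\ge\gamma$. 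Hence it suffices to establish
\[
H_{n,t}^{2}\;\ge\;\frac{k\,\Delta(H_{k,t})-1}{k-1}\,H_{n,t}.
\]

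To bring in the window gap, I would set $B_{m}=\sum_{i=m}^{m+k-1}h_{i,i+1}$, the sum of $k$ consecutive bond terms. Each $B_{m}$ is a copy of $H_{k,t}$, so it has the same nonzero spectrum and the same gap, giving $B_{m}^{2}\ge\Delta(H_{k,t})\,B_{m}$ for every $m$ (on the kernel both sides vanish, and on its complement $B_{m}\ge\Delta(H_{k,t})I$). I would then sum these window inequalities. Expanding each square and using $h_{i,i+1}^{2}=h_{i,i+1}$, I split $H_{n,t}^{2}=H_{n,t}+\mathcal{N}+\mathcal{L}$ into the diagonal part, the nearest-neighbour cross terms $\mathcal{N}$ (bonds sharing a site), and the longer-range cross terms $\mathcal{L}=\sum_{r\ge 2}\mathcal{L}^{(r)}$, and treat each $B_{m}^{2}$ the same way.

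The combinatorial heart is the multiplicity count: each bond lies in $k$ windows, each nearest-neighbour pair in $k-1$ windows, and each distance-$r$ pair in $(k-r)$ windows (zero once $r\ge k$). Forming the combination $(k-1)H_{n,t}^{2}-\sum_{m}B_{m}^{2}$, the diagonal and nearest-neighbour contributions collapse exactly and one is left with $-H_{n,t}$ plus a manifestly nonnegative remainder $\sum_{r\ge 2}c_{r}\mathcal{L}^{(r)}$ with $c_{r}=\min(r-1,k-1)\ge 0$. The decisive structural fact here is that for $|i-j|\ge 2$ the projectors $h_{i,i+1}$ and $h_{j,j+1}$ act on disjoint pairs of sites, hence commute, so their product is again a projector and every $\mathcal{L}^{(r)}\ge 0$. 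This yields $(k-1)H_{n,t}^{2}-\sum_{m}B_{m}^{2}\ge -H_{n,t}$, and combining it with $\sum_{m}B_{m}^{2}\ge\Delta(H_{k,t})\sum_{m}B_{m}=k\,\Delta(H_{k,t})\,H_{n,t}$ gives
\[
(k-1)H_{n,t}^{2}\;\ge\;\sum_{m}B_{m}^{2}-H_{n,t}\;\ge\;k\,\Delta(H_{k,t})\,H_{n,t}-H_{n,t},
\]
which is precisely the claimed bound.

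The hard part will be the honest treatment of the two chain ends. With the open boundary used here the window multiplicities drop near the endpoints, so $\sum_{m}B_{m}\neq kH_{n,t}$ and some boundary pairs sit in fewer windows than the bulk count predicts; the clean cancellation above must therefore be supplemented by a boundary analysis verifying that these corrections act in the favourable direction rather than eroding the constant $\tfrac{k\Delta(H_{k,t})-1}{k-1}$. I expect this bookkeeping, together with the check that every term dropped from the long-range sum is genuinely positive semidefinite, to be the technical crux. Everything else — idempotency of the twirl, the fact that each window reproduces $H_{k,t}$ under translation invariance of the local terms, and the positive-semidefinite squaring criterion — is routine.
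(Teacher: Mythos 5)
The paper does not prove this lemma at all --- it is quoted verbatim from Knabe's 1988 paper and used as a black box --- so there is no in-paper argument to compare against. Your reconstruction is the standard Knabe proof and it is essentially correct: the reduction of the gap statement to the operator inequality $H_{n,t}^2\ge\gamma H_{n,t}$ is valid for a frustration-free PSD Hamiltonian, the window inequality $B_m^2\ge\Delta(H_{k,t})B_m$ holds because all local terms are unitarily equivalent copies of the same two-qudit projector so each $k$-bond window really is a copy of $H_{k,t}$, the multiplicity count ($k$ windows per bond, $k-1$ per adjacent pair, $k-r$ per distance-$r$ pair) is right, and the positivity of the long-range cross terms follows since commuting projectors have a PSD product. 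Two caveats. First, the boundary issue you flag is the genuine crux and not merely bookkeeping: with open ends and only the fully contained windows, $\sum_m B_m\neq k\,H_{n,t}$, so the final chain of inequalities breaks as written; Knabe's own treatment (and the modern versions) repairs this by reweighting or duplicating the boundary windows so that every bond is covered exactly $k$ times, and one must then check that the extra weighted windows still satisfy the window gap inequality. You have correctly located the difficulty but not yet resolved it. Second, be careful with the paper's notation: the lemma calls $H_k=\sum_{i=1}^{k}h_{i,i+1}$ the Hamiltonian ``restricted to $k$ qubits,'' but that sum has $k$ bonds and hence lives on $k+1$ qubits; in Knabe's bound the $k$ appearing in $\frac{k\Delta(H_{k,t})-1}{k-1}$ must count local terms (bonds) in the window, which is what your multiplicity count assumes, so your argument is internally consistent --- just make sure the $\Delta(H_{k,t})$ you plug in is the gap of the $k$-bond window and not of the $k$-qubit ($(k-1)$-bond) chain.
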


Let us explain now, why we used the terms "global" and "local" to describe these two bounds. From Eq. \eqref{boundMH} it can be noticed that to lower bound a given spectral gap, one needs to compute only one spectral gap, the one that is given by $t$ and the dimension of the system (but we consider qubits only, so the value $d = 2$ is set). The bound is global in the sense that it does not change with the number of qubits - it remains constant for an arbitrary length of a quantum circuit. On the contrary, the second bound (from Eq. \eqref{boundKnabe}) has a totally reverse property. There, the gap for a higher number of qubits implies better precision of the bound.

Of course, as we will see later, both bound have pros and cons. For example, the "global" bound usually bounds the spectral gap value in a harsh way (only giving an info that a gap is open) and, what is more, one can observe that for big values of $t$, to estimate that bound, one needs to know a value of the spectral gap for a big number of qubits. Keeping in mind that calculating spectral gaps is, in principle, computationally hard problem, the effectiveness of the "global" bound is limited. The good thing about it is that it is always positive (giving thus nonzero convergence rate
of random circuits to a given $t$-design). On the other hand, it is not true for the second bound. For small numbers of qubits, the "local" bound can be negative and it means that one needs to increase its precision by calculating the lower bound using a gap for a bigger number of qubits. Moreover, the value of the bound sometimes fluctuates - so the value of the bound for a number of qubits $k$ can be possibly worse than that for $k-1$ qubits. The advantage of this bound is that when it is positive its value is usually closer to the exact value than this predicted by the bound from Eq. \eqref{boundMH}.

\subsection{Method of Calculations}
In this section we will present some methods of calculations used in this paper. Notation is mostly taken from~\cite{Brandao2010-tdesign}.\\

We know that for an arbitrary subspace of  operator space we are able to construct basis of operators which is orthonormal in the Hilbert-Schmidt scalar product. For this construction we use linear combination of nonorthogonal permutation operators acting on $\mathcal{H}^{\ot t}$:
\be
R_k=\sum_{\pi \in S(t)}b_{k \pi}(t)V_{\pi}.
\ee
Using operators $R_k$ we can rewrite operators $P_{i,i+1}$ and $P_i$ from Eq.~\eqref{eq:Pii} and \eqref{Pi} in a form
\be
P_{i,i+1}=\sum_k|R_k^{(i,i+1)}\>\<R_k^{(i,i+1)}|, \quad P_i=\sum_k|R_k^{(i)}\>\<R_k^{(i)}|.
\ee
To illustrate the action of operators $R_k^{(i,i+1)}$ consider a $t \times n$ lattice (see Figure~\ref{fig:0}), then $R_k^{(i,i+1)}$ acts jointly on systems from $i^{\text{th}}$ and $(i+1)^{\text{th}}$ column.
\begin{figure}[ht]
\includegraphics[width=0.3\textwidth, height=0.2\textheight]{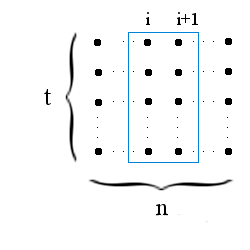}
\caption{\textbf{(Color online) The figure presents a $t \times n$ lattice. We have $n$ columns and in every column there are $t$ qubits on which permutations from $S(t)$ act. In a blue rectangle, we marked $i^{\text{th}}$ and $(i+1)^{\text{th}}$ column on which the operator $R_k^{(i,i+1)}$ acts.}
\label{fig:0}}
\end{figure}
Finally to obtain a representation of an operator $X$  from equation~\eqref{x} in our product basis i.e.
\be
X=\sum_{\bf{kl}} a_{\bf{k}\bf{l}}|R_{k_1}^{(1)}\ot \cdots \ot R_{k_n}^{(n)}\>\<R_{l_1}^{(1)}\ot \cdots \ot R_{l_n}^{(n)}|,
\ee
where ${\bf{k}}=(k_1\ldots k_n)$ and ${\bf{l}}=(l_1\ldots l_n)$ are multiindices, we have to express operators $P_{i,i+1}$  in terms of a product basis.
For this purpose we use argumentation from~\cite{Brandao2010-tdesign}, having:
\be
\label{double}
R^{(1,2)}_k=\sum_{s,u}r_{su}^{(k)}R_s^{(1)}\otimes R_u^{(2)},
\ee
where $r_{su}^{(k)}$ are some coefficients which we want to know. In this paper, to calculate numbers $r_{su}^{(k)}$ we use the Schur basis (see, for example,~\cite{Boerner}). Then every operator $R_k^{(1,2)}$ corresponds to a linear combination of $E_{kl}^{\alpha} \ot E_{mn}^{\beta}$, where operators $E_{ij}^{\alpha}$ form an operator basis in a given invariant subspace of $\mathcal{H}^{\ot t}$ labeled by $\alpha$. Now we see that an index $k$ in Eq.~\eqref{double} is indeed a multiindex.
The general method of constructing such an operator basis via representation theory is given in Appendix~\ref{App:A}. Of course computing eigenvalues in an arbitrary basis (also in our basis) is quite hard, because complexity of calculations grows very fast with the parameter $t$.

\subsection{Numerical results}
\begin{table}[htp]
	\begin{center}
		\begin{tabular}[t]{|r|r|r|}
		\multicolumn{1}{c}{$t$} & \multicolumn{1}{c}{$k$} & \multicolumn{1}{c}{$\Delta(H_{k, t})$} \\ \hline
		2,3 & 2 & 0.6  \\ \hline
		2,3 & 3 & 0.43431  \\ \hline
		2,3 & 4 & 0.35279  \\ \hline
		2,3 & 5 & 0.30718  \\ \hline
		2,3 & 6 & 0.27922  \\ \hline
		2,3 & 7 & 0.2609  \\ \hline
		2,3 & 8 & 0.24825  \\ \hline
		2,3 & 9 & 0.23915  \\ \hline
		2,3 & 10 & 0.23241  \\ \hline
		  2 & 11 & 0.2273  \\ \hline
		  2 & 12 & 0.2232  \\ \hline	
		  2 & 13 & 0.2201  \\ \hline
			2 & 14 & 0.2175  \\ \hline
			2 & 15 & 0.2154  \\ \hline
			2 & 16 & 0.2136  \\ \hline
		  2 & 17 & 0.2122  \\ \hline		
		  2 & 18 & 0.2109  \\ \hline
		  2 & 19 & 0.2098  \\ \hline
		  2 & 20 & 0.2089  \\ \hline		
		\multicolumn{1}{c}{} & \multicolumn{1}{c}{} & \multicolumn{1}{c}{} \\
		\end{tabular}
		\hspace{0.1in}
		\begin{tabular}[t]{|r|r|r|}
		\multicolumn{1}{c}{$t$} & \multicolumn{1}{c}{$k$} & \multicolumn{1}{c}{$\Delta(H_{k, t})$} \\ \hline
		4 & 2 & 0.5  \\ \hline
		4 & 3 & 0.45298644403  \\ \hline
		4 & 4 & 0.42486035753  \\ \hline
		4 & 5 & 0.41022855573  \\ \hline
		\multicolumn{1}{c}{} & \multicolumn{1}{c}{} & \multicolumn{1}{c}{}
		\end{tabular}
		\hspace{0.1in}
		\begin{tabular}[t]{|r|r|r|}
		\multicolumn{1}{c}{$t$} & \multicolumn{1}{c}{$k$} & \multicolumn{1}{c}{$\Delta(H_{k, t})$} \\ \hline
		5 & 2 & 0.37373469602  \\ \hline
		5 & 3 & 0.32912548483  \\ \hline
		\multicolumn{1}{c}{} & \multicolumn{1}{c}{} & \multicolumn{1}{c}{}
		\end{tabular}
		
	\caption{Numerically calculated spectral gaps for different degrees $t$ and qubits number $k$.}
	\label{tab-gap}
	\end{center}
\end{table}

Here, we present our numerically calculated values of spectral gaps for different cases, together with corresponding lower bounds (both "`local"' and "`global"' when possible). In Figure \ref{fig:1}, there are spectral gaps $\Delta(H_{k, t})$ for the symmetric group $S(2)$ and different number of qubits $n$, corresponding to the $2$-design. Both lower bounds are also marked. Similarly, values of spectral gaps $\Delta(H_{k, t})$, for $t=3,4,5$ ($3,4,5$-designs) can be found in Figures \ref{fig:2}, \ref{fig:3} and \ref{fig:4}. Notice, that with increasing $t$, the "`global"' bound tends to less and less information, namely, telling us that gaps are closed. What is more, for the last plot (Figure \ref{fig:4}), the "local" bound does not give any bound since it takes a negative value. Knowledge of gaps for $k \geq 4$ is required to resolve this problem.
\begin{figure}[ht]
\includegraphics[width=0.45\textwidth, height=0.25\textheight]{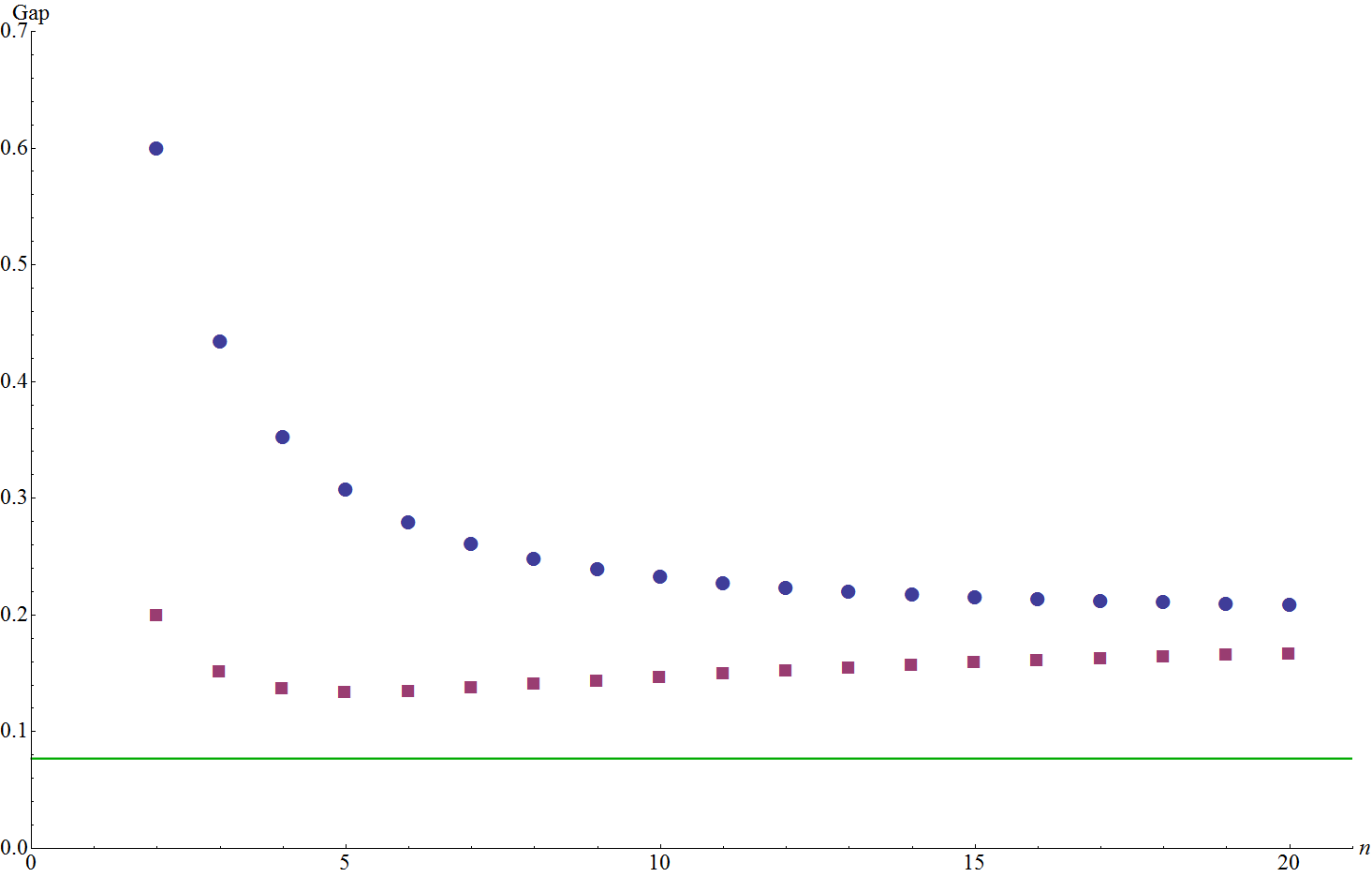}
\caption{\textbf{(Color online) Numerically evaluated spectral gaps $Gap \equiv \Delta(H_{k, t})$ for $t = 2$ (dots) vs. "local" bound for $k$ qubits (squares) vs. "global" bound for a given $n$ (here, $n=2$) (line).}
\label{fig:1}}
\end{figure}
\begin{figure}[ht]
\includegraphics[width=0.45\textwidth, height=0.25\textheight]{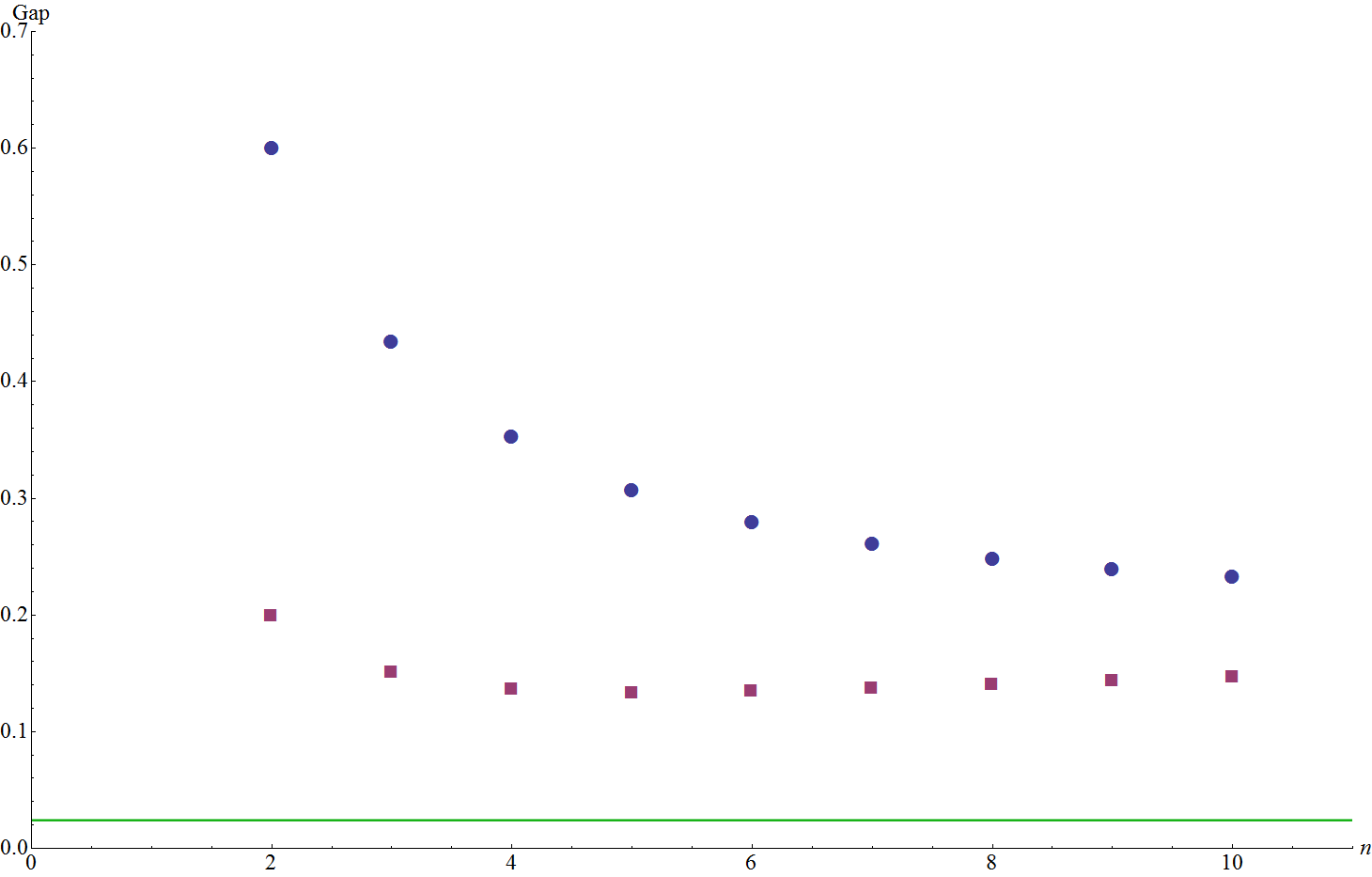}
\caption{\textbf{(Color online) Numerically evaluated spectral gaps $Gap \equiv \Delta(H_{k, t})$ for $t = 3$ (dots) vs. "local" bound for $k$ qubits (squares) vs. "global" bound for a given $n$ (here, $n=4$) (line).}
\label{fig:2}}
\end{figure}
\begin{figure}[ht]
\includegraphics[width=0.45\textwidth, height=0.25\textheight]{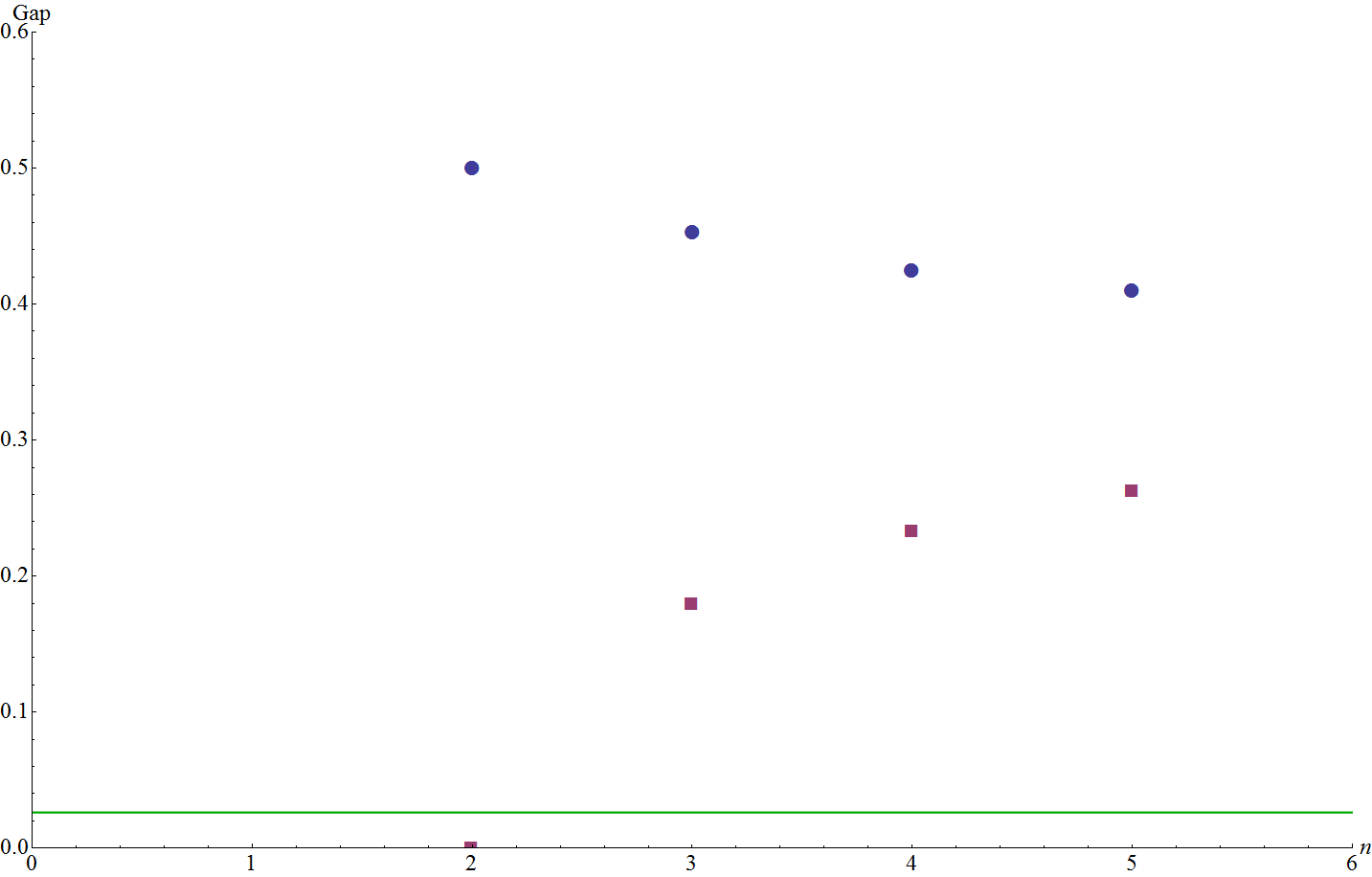}
\caption{\textbf{(Color online) Numerically evaluated spectral gaps $Gap \equiv \Delta(H_{k, t})$ for $t = 4$ (dots) vs. "local" bound for $k$ qubits (squares) vs. "global" bound for a given $n$ (here, $n=4$) (line).}
\label{fig:3}}
\end{figure}
\begin{figure}[ht]
\includegraphics[width=0.45\textwidth, height=0.25\textheight]{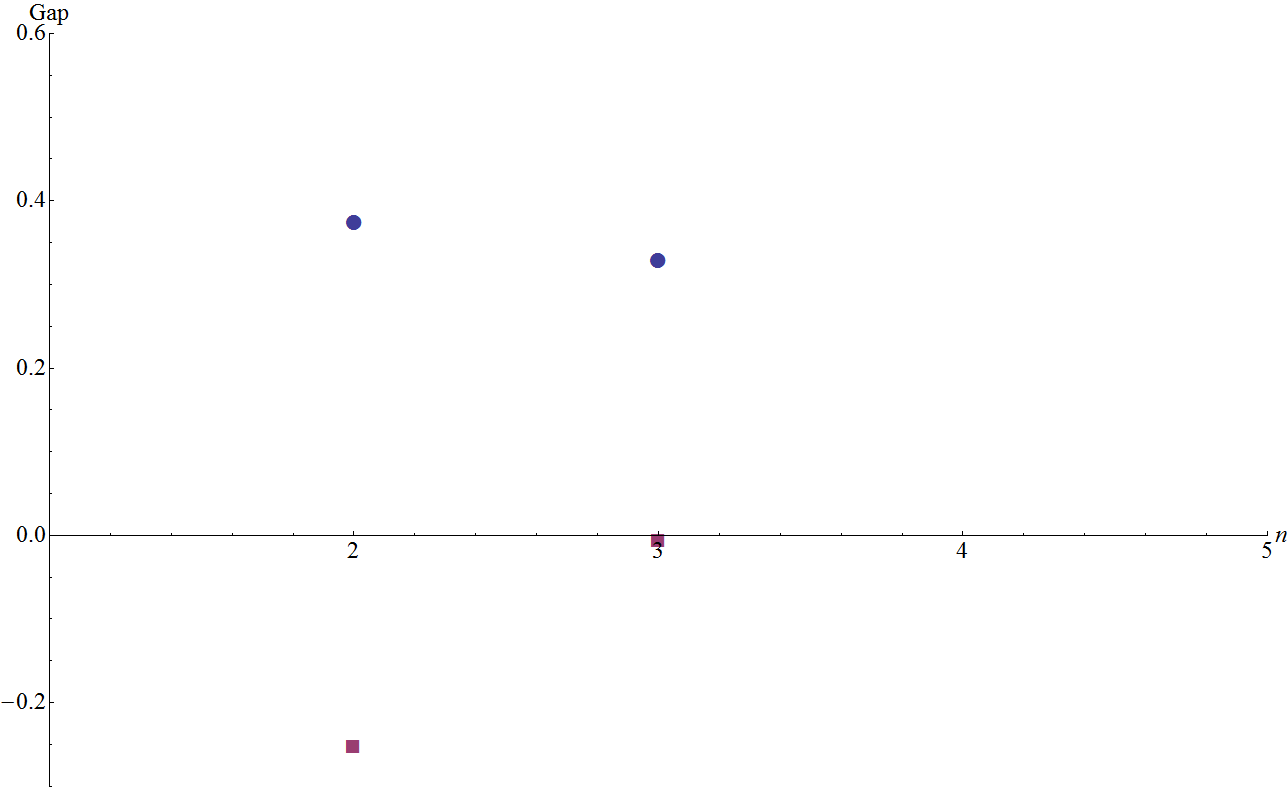}
\caption{\textbf{(Color online) Numerically evaluated spectral gaps $Gap \equiv \Delta(H_{k, t})$ for $t = 5$ (dots) vs. "local" bound for $k$ qubits (squares). Notice that the "local" one does not give any bound in this case, because it is negative. Here, the "global" bound can't be reported since a value of a gap for $H_{5,5}$ is needed to calculate it.}
\label{fig:4}}
\end{figure}

From Figures \ref{fig:1} and \ref{fig:2} one can observe that, in the case of $t = 2$ and $t = 3$, all spectral gaps are the same. This can (possibly) imply the following
\begin{conjecture}
Knowledge of the spectral gaps of the local Hamiltonian $H_{k,2}$ from Theorem \ref{th} is sufficient to know the ones for $H_{k,3}$, since there is a one to one correspondence between them.
\end{conjecture}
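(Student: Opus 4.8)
The plan is to strip the statement down to a finite identity in the representation theory of $S(2)$ and $S(3)$ at qubit dimension $d=2$, and then to promote that identity to an explicit spectrum-preserving map between the two chains $H_{k,2}$ and $H_{k,3}$. The first step is to pin down the object that actually carries the spectrum. By Schur--Weyl duality the single-column fixed space $\cA_t$ is the image of the group algebra $\C[S(t)]$ in $\mathrm{End}((\C^2)^{\ot t})$, spanned by the permutation operators $V_\pi$; since $d=2$ kills every Young diagram with more than two rows, $\dim\cA_2=2$ (diagrams $(2),(1,1)$) while $\dim\cA_3=5$ (diagrams $(3),(2,1)$, the sign diagram $(1^3)$ dying). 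The full Hamiltonian $H_{k,t}$ of Theorem~\ref{th} acts on $\cA_t^{\ot k}$ and is completely determined by two combinatorial inputs: the single-column Gram matrix $g^{(t)}_{\pi\sigma}=\tr(V_{\pi\sigma^{-1}})=2^{\,c(\pi\sigma^{-1})}$ (with $c$ the number of cycles) and the bond projector $P_{i,i+1}$, whose range is the diagonally embedded set $\{V_\pi\ot V_\pi\}$ and whose matrix in the product basis is fixed by the bond Gram weights $4^{\,c(\pi\sigma^{-1})}$ together with the fusion coefficients $r^{(k)}_{su}$ of Eq.~\eqref{double}. Hence the conjecture is entirely a statement about these numbers.

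To expose the expected mechanism I would next pass to a cleaner picture via the qubit self-duality $U^{\ot t}\ot(U^\star)^{\ot t}\cong V^{\ot 2t}$ with $V\in\mathrm{SU}(2)$: each column for degree $t$ is isomorphic, as a module, to $\mathrm{SU}(2)$ acting on $2t$ qubits, and $\cA_t$ becomes the Temperley--Lieb algebra $\mathrm{TL}_{2t}(\delta{=}2)$. In this language $t=2$ is $\mathrm{TL}_4(2)$ and $t=3$ is $\mathrm{TL}_6(2)$, and there is a canonical two-strand inclusion $\mathrm{TL}_4(2)\hookrightarrow\mathrm{TL}_6(2)$. This is what makes the coincidence plausible and suggests the right map: one sees it already at the level of Gram spectra, where the single-column eigenvalue $6$ is shared by the $(2)$-block at $t=2$ and the $(2,1)$-block at $t=3$, so the nontrivial sector of $\cA_2$ ought to sit isometrically inside $\cA_3$.

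The core of the argument is then to build an isometry $\Phi:\cA_2\to\cA_3$ with $\Phi(I)=I$, carrying the nontrivial part of $\cA_2$ onto the $(2,1)$-component of $\cA_3$, and to prove that $\Phi^{\ot k}$ intertwines the bonds, $P^{(3)}_{i,i+1}\,\Phi^{\ot k}=\Phi^{\ot k}\,P^{(2)}_{i,i+1}$. Because each $H_{k,t}$ is the same chain of identical local terms, it suffices to verify this on a single bond; it then propagates to every $k$, yielding $\mathrm{spec}(H_{k,2})\setminus\{0\}=\mathrm{spec}(H_{k,3})\setminus\{0\}$, equal gaps, and the conjectured one-to-one correspondence (the dimension mismatch $2^k$ versus $5^k$ being absorbed into the enlarged kernel of $H_{k,3}$). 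Through the reduction of the first paragraph, the single-bond intertwining condition becomes a finite check relating the $2^{c}$- and $4^{c}$-weighted overlaps of $S(2)$ to those of $S(3)$ at $d=2$.

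The main obstacle is precisely this intertwining, which is not a formal consequence of Schur--Weyl: the bond algebra at $t=3$ is the whole $\C[S(3)]$ (six-dimensional), with strictly richer fusion than $\C[S(2)]$, so one must check by hand that, after the sign diagram is annihilated and one restricts to $(2,1)$, the $d=2$ overlaps collapse onto those governing the $(1,1)$ sector at $t=2$. The truly delicate point is to control the directions of $\cA_3^{\ot k}$ outside the image of $\Phi^{\ot k}$: I expect the cleanest route is to show that these extra directions lie in the common kernel of all local terms $h_{i,i+1}$ — i.e. they are additional zero-energy ground states rather than low-lying excitations — so that they cannot manufacture a smaller nonzero eigenvalue for $t=3$. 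Establishing this ``inertness'' of the complement is what would simultaneously explain the equality of all tabulated gaps and upgrade it to the bijection asserted in the conjecture.
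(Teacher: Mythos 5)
The first thing to say is that the paper does not prove this statement: it appears only as a conjecture distilled from the numerical coincidence of the tabulated gaps for $t=2$ and $t=3$ ($k=2,\dots,10$), and the concluding section explicitly lists the question of why these gaps agree among the open problems. Your proposal must therefore stand on its own, and as written it is a strategy sketch whose decisive step cannot work. The two load-bearing claims --- the existence of an isometry $\Phi:\cA_2\to\cA_3$ with $P^{(3)}_{i,i+1}\,\Phi^{\ot k}=\Phi^{\ot k}\,P^{(2)}_{i,i+1}$, and the ``inertness'' of the complement of $\mathrm{im}\,\Phi^{\ot k}$ --- are both deferred rather than established, and the second is false as formulated. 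The common kernel of all local terms $h_{i,i+1}$ of $H_{k,3}$ is the span of the global permutation operators $V_\pi^{(1)}\ot\cdots\ot V_\pi^{(k)}$ with $\pi\in S(3)$, hence at most $3!=6$-dimensional, whereas the orthogonal complement of $\mathrm{im}\,\Phi^{\ot k}$ inside $\cA_3^{\ot k}$ has dimension $5^k-2^k$, which is already $21$ at $k=2$. So the ``extra directions'' cannot all be additional zero-energy ground states; almost all of them are genuinely excited, and what you actually need is that every nonzero eigenvalue of $H_{k,3}$ on that complement is at least $\Delta(H_{k,2})$. That is essentially the original problem again, not a consequence of the embedding. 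Note also that even if the intertwining isometry exists, it only yields the one-sided inequality $\Delta(H_{k,3})\le\Delta(H_{k,2})$, since it injects the spectrum of $H_{k,2}$ into that of $H_{k,3}$ but says nothing about what lives outside its image.

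Two smaller points. The single-bond intertwining is itself a real constraint, not guaranteed by the observation that the Gram eigenvalue $6$ occurs both in the $(2)$-block at $t=2$ and the $(2,1)$-block at $t=3$: you must check that $\Phi\ot\Phi$ carries the $2$-dimensional range of $P^{(2)}_{1,2}$ into the $6$-dimensional range of $P^{(3)}_{1,2}$ \emph{and} carries its orthogonal complement into the orthogonal complement, which involves all the $d=2$ overlaps you mention; until that finite check is actually performed the proposal proves nothing. Separately, the identification of the single-column fixed space with $\mathrm{TL}_{2t}(\delta=2)$ is off: the fixed-point space of $X\mapsto\int U^{\ot t}X(U^{\dagger})^{\ot t}\mu_H(dU)$ for $U\in\mathbb{U}(2)$ is the image of $\C[S(t)]$ in $\mathrm{End}((\C^2)^{\ot t})$, i.e.\ $\mathrm{TL}_t(2)$, of dimension $2$ and $5$ for $t=2,3$ as you correctly count; $\mathrm{TL}_{2t}(2)$ is the full commutant of $V^{\ot 2t}$ and is strictly larger (dimension $14$ already for $t=2$), so the proposed inclusion $\mathrm{TL}_4(2)\hookrightarrow\mathrm{TL}_6(2)$ is not acting on the right objects. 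None of this rules out an intertwiner-based proof in principle, but the complement analysis must be replaced by an actual lower bound on the excited spectrum outside the embedded copy of $\cA_2^{\ot k}$, and that is precisely where the content of the conjecture lies.
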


\begin{remark}
Let us explain here one important thing about our notation. For simplicity and transparency, we mark each lower bound from Eq. \eqref{boundKnabe} corresponding to some spectral gap as a "square" in Figures \ref{fig:1}, \ref{fig:2}, \ref{fig:3} and \ref{fig:4}.
\end{remark}
\subsection{Basic examples}
Consider the case when $t = 2$. The spectral gap for $k = 2$ is equal to $\Delta(H_{2, 2}) = \frac{6}{10}$. Using Theorem \ref{spec}, we have that, in this case, the second largest eigenvalue $\lambda_2$ of $G_{2,2}$ equals to $\frac{1}{5}$. Applying this to the "local" bound, we get that the bounded gap is equal to $\frac{2}{10}$, which implies (according to Def. \ref{def2} and Fact \ref{fact:Ginf}) that $5n\log(\frac{1}{\varepsilon})$-size random quantum circuit is an $\varepsilon$-approximate $2$-design. From the "global" bound, we get that $13.34n\log(\frac{1}{\varepsilon})$-size random quantum circuit is an $\varepsilon$-approximate $2$-design. Looking carefully at Fig. \ref{fig:1}, one can observe that $\Delta(H_{20, 2}) \approx 0.21$ seems to be bounding the convergence of the values of gaps quite well. It tells us then that the length of a circuit should scale as $4.76n\log(\frac{1}{\varepsilon})$. See the references from the introduction (especially \cite{Dankert2009-Haar}) for possible applications of these results.

When $t = 3$, $\Delta(H_{2, 3}) = \frac{6}{10}$ and thus again the above result from the "local" bound and the convergence is valid. Using the "`global"' bound we have that $35.95n\log(\frac{1}{\varepsilon})$-size random quantum circuit is an $\varepsilon$-approximate $3$-design. Note that $3$-designs can be used to solve the $U$-circuit checking problem \cite{Brandao2010-tdesign}.

The results for $t=2,3$ obtained using the "local" bound, are in accordance with these from \cite{Brandao2010-tdesign} and what is more, for $t=2$ our outcomes match those from \cite{Dankert2009-Haar}, but are a little bit more precise. 

For $4$-designs, which can be used to bound the equilibration time in some cases (see, \cite{Brandao2011-thermo, Masanes2011-thermo}), from the "local" bound, we have that a $2.44n\log(\frac{1}{\varepsilon})$-size circuit can form them. From th other hand, if the gap $\Delta(H_{5, 4}) \approx 0.41$ for $5$ qubits, which is the best explicity calculated value, can be approximately used as the bound to which all others converge then we get that $3.81n\log(\frac{1}{\varepsilon})$-size random quantum circuit is an $\varepsilon$-approximate $4$-design. Here, the "global" bound gives the following length - $37.66n\log(\frac{1}{\varepsilon})$. The result from the first bound seems to be interesting. Why?  We except that a length of a circuit should increase with increasing $t$ and comparing the results from the "local" bound, we have the circuit for $t=4$ is shorter than for $t=2,3$.

For $t=5$, not much can be said. All values calculated according to the "local" bound are negative and $\Delta(H_{2, 5}) \approx 0.29$ rather cannot be used as the value that bounds all others. Here, a prediction from the "global" bound also cannot be reported since a value of a gap for $H_{5,5}$ is needed to calculate it.

It is worth asking whether these results can be improved. As proved in \cite{Brandao-Harrow-Horodecki} (see, Proposition $8$ from that paper), neither the $t$ nor $n$ dependence can be improved by more than polynomial factors.
\section{Conclusions and Open problems}
In this paper, we numerically studied the recent statement that local random quantum circuits
acting on $n$ qubits and composed of polynomially many two-qubit gates form an approximate
unitary $\poly(n)$-design.

To this end, we evaluated spectral gaps of local Hamiltonians acting on $n$ qubits, using techniques from many-body physics and relating the degree $t$ in $t$-design to the symmetric group $S(t)$. As an additional result, it occurs that for a given $k$, $H_{k,2}$ is equal to $H_{k,3}$, while there is no such a connection between $H_{k,t}$ for higher values $t$.

What is more, we compare our results to two lower bounds for spectral gaps, leading to conclusion that for small number of qubits, lower bounding is, usually, not sufficient to obtain a reliable result - there is a big difference in actual values of spectral gaps and lower bound. For big numbers of qubits and high orders $t$, there is another problem, to obtain a "good" lower bound, one needs to calculate first, a spectral gap for a quite big value of $n$ and $t$, which is a quite complicated task, from the computational point of view. That's why, we were unable to compute the "global" lower bound for the case, when $t=5$, since it requires initial knowledge of the spectral gap for $k = 5$.

Here, one possible way, to obtain better results would be to use a super-computer and/or the power of parallel computing for calculations of spectral gaps. Another way, would be to find a "better" basis in which our operator $X$ takes a diagonal form or at least a block diagonal form. We leave these tasks as open.

At the end, we would like to point out one problem which should be of some interest, is it possible to approximate values of spectral gaps by some function with dependence on parameters $k$ and $t$? Based on the numerics, the function $(a(t) - \frac{b(t)}{k^k})^\frac{c(t)}{k}$ looks like a promising candidate.

Also, two interesting questions related to the structure of $t$-designs, namely, why all spectral gaps for $t=2$ and $t=3$ are the same and why for $t=4$ the circuit is shorter than for $t=2,3$, remain without an answer.

\section*{Acknowledgments}
We want to thank Fernando G.S.L. Brand\~ao for discussions. P.\'C. acknowledge helpful discussions with Norbert Schuch.
M.S. is supported by the International PhD Project "Physics of future quantum-based information technologies": grant MPD/2009-3/4 from Foundation for Polish Science.
The work is also supported by Polish Ministry of Science and Higher Education Grant no.
IdP2011 000361. Part of this work was done in National Quantum Information Centre of Gda\'nsk.

\section{Appendix A: Orthogonalisation of representation operators of finite groups}
\label{App:A}
In this section we we briefly remind some properties of the algebra
generated by a given complex finite dimensional representation of the finite
group $G.$ The content of this section can be found in the standard books on
representation theory of finite groups and algebras, for example, in \cite{Curtis, Littlewood}.

Any complex finite-dimensional representation $D:G\rightarrow Hom(V)$ of the
finite group $G,$ where $V$ is a complex linear space, generates a algebra $%
A_{V}[G]$ $\subset $ $Hom(V)$ which isomorphic to the group algebra $%
\mathbb{C}
\lbrack G]$ if the representation $D$ is faithful. Obviously
\be
A_{V}[G]=\operatorname{span}_{%
\mathbb{C}
}\{D(g),\quad g\in G\}.
\ee
If the operators $D(g)$ are linearly independent, then they form a basis of
the algebra $A_{V}[G]$ and $\dim A_{V}[G]=\left\vert G\right\vert $. It is
also possible, using matrix irreducible representations, to construct a new
basis which has remarkable properties, very useful in applications of
representation theory. Below we describe this construction.

\begin{notation}
\label{not1}
Let $G$ be a finite group of order $\left\vert G\right\vert =n$ which has $r$
classes of conjugated elements. Then $G$ has exactly $r$ inequivalent,
irreducible representations, in particular $G$ has exactly $r$ inequivalent,
irreducible matrix representations. Let
\be
D^{\alpha }:G\rightarrow Hom(V^{\alpha }),\qquad \alpha =1,2,....,r,\qquad
\dim V^{\alpha }=d_{\alpha }
\ee
be all inequivalent, irreducible representations of $G$ and let chose these
representations to be all unitary (always possible) i.e.
\be
D^{\alpha }(g)=(D_{ij}^{\alpha }(g)),\qquad \text{and} \qquad
(D_{ij}^{\alpha }(g))^{\dagger}=(D_{ij}^{\alpha }(g))^{-1},
\ee
where $i,j=1,2,....,d_{\alpha }$.
\end{notation}

The matrix elements $D_{ij}^{\alpha }(g)$ will play a crucial role in the
following.

\begin{definition}
\label{ad1}
Let $D:G\rightarrow Hom(V)$ be an unitary representation of a finite group $G
$ such that the operators $D(g),$ $\ g\in G$ are linearly independent i.e. $%
\dim A_{V}[G]=\left\vert G\right\vert $ and let $D^{\alpha }:G\rightarrow
Hom(V^{\alpha })$ be all inequivalent, irreducible representations of $G$
described in Notation~\ref{not1} above. Define
\be
E_{ij}^{\alpha }=\frac{d_{\alpha }}{n}\sum_{g\in G}D_{ji}^{\alpha
}(g^{-1})D(g),
\ee
where $\alpha =1,2,...,r,\quad i,j=1,2,..,d_{\alpha },\quad
E_{ij}^{\alpha }\in A_{V}[G]\subset Hom(V)$.
\end{definition}

The operators have noticeable properties listed in the

\begin{theorem}
\label{at1}
I) There are exactly $\left\vert G\right\vert =n$ \ \ nonzero operators $%
E_{ij}^{\alpha }$ and
\be
D(g)=\sum_{ij\alpha }D_{ij}^{\alpha }(g)E_{ij}^{\alpha }
\ee

II) the operators $E_{ij}^{\alpha }$ are orthogonal with respect to the
Hilbert-Schmidt scalar product in the space $Hom(V).$
\be
(E_{ij}^{\alpha },E_{kl}^{\beta })=\tr((E_{ij}^{\alpha })^{\dagger}E_{kl}^{\beta
})=k_{\alpha }\delta ^{\alpha \beta }\delta _{ik}\delta _{jl},\qquad
k_{\alpha }\geq 1,
\ee

\ \ where $k_{\alpha }$ \ is equal to the multiplicity of the irreducible
representation $D^{\alpha }$ in $D$ and it does not depend on $\ \ \ \ \ \ \
\ \ \ \ \ \ \ \ \ \ \ \ \ \ \ \ \ \ \ \ \ \ \ \ \ i,j=1,2,....,d_{\alpha }$

III) the operators $E_{ij}^{\alpha }$ satisfy the following composition rule%
\be
E_{ij}^{\alpha }E_{kl}^{\beta }=\delta ^{\alpha \beta }\delta
_{jk}E_{il}^{\alpha },
\ee
\ \ \ \ in particular $E_{ii}^{\alpha }$ are orthogonal projections.
\end{theorem}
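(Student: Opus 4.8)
The plan is to derive all three parts from a single engine: the Schur (great) orthogonality relations for the unitary irreducible matrix representations $D^\alpha$. Because the $D^\alpha$ are chosen unitary, $\overline{D^\beta_{kl}(g)} = D^\beta_{lk}(g^{-1})$, so the orthogonality relations take the convenient form
\be
\sum_{g\in G} D^\alpha_{ij}(g)\, D^\beta_{lk}(g^{-1}) = \frac{n}{d_\alpha}\,\delta^{\alpha\beta}\delta_{ik}\delta_{jl}.
\ee
Everything below is bookkeeping with this identity, together with the homomorphism property $D(g)D(h)=D(gh)$ and, for part I, the completeness relation $\sum_\alpha d_\alpha\chi^\alpha(k) = n\,\delta_{k,e}$ coming from the character of the regular representation.

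I would prove the composition rule (III) first, since (II) is a quick corollary of it. Expanding the product $E^\alpha_{ij}E^\beta_{kl}$ yields a double sum over $g,h$ of $D^\alpha_{ji}(g^{-1})D^\beta_{lk}(h^{-1})D(g)D(h)$. The key move is to substitute $s=gh$, replace $D(g)D(h)$ by $D(s)$, expand $D^\beta_{lk}(s^{-1}g)=\sum_m D^\beta_{lm}(s^{-1})D^\beta_{mk}(g)$, and then carry out the sum over $g$ alone via the orthogonality relation above. That inner sum collapses to $\tfrac{n}{d_\alpha}\delta^{\alpha\beta}\delta_{jk}\delta_{im}$, the remaining $s$-sum reassembles into $E^\alpha_{il}$, and one reads off $E^\alpha_{ij}E^\beta_{kl}=\delta^{\alpha\beta}\delta_{jk}E^\alpha_{il}$. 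Specialising to $i=j=k=l$ with $\alpha=\beta$ shows $E^\alpha_{ii}$ is idempotent; combined with the adjoint identity below it becomes an orthogonal projection.

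For the adjoint I would check directly from the definition that $(E^\alpha_{ij})^\dagger = E^\alpha_{ji}$, using $D(g)^\dagger = D(g^{-1})$ (unitarity of $D$), $\overline{D^\alpha_{ji}(g^{-1})} = D^\alpha_{ij}(g)$, and the relabelling $h=g^{-1}$. Part (II) then follows at once: $(E^\alpha_{ij},E^\beta_{kl}) = \tr(E^\alpha_{ji}E^\beta_{kl}) = \delta^{\alpha\beta}\delta_{ik}\,\tr(E^\alpha_{jl})$ by (III), so it only remains to compute $\tr(E^\alpha_{jl})$. Writing $\tr D(g)=\chi(g)$ and decomposing the big representation $D$ into irreducibles with multiplicities $m_\gamma$, I would expand $\chi=\sum_\gamma m_\gamma\chi^\gamma$, insert $\chi^\gamma=\sum_p D^\gamma_{pp}$, and apply orthogonality once more to get $\tr(E^\alpha_{jl}) = m_\alpha\,\delta_{jl}$. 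This identifies $k_\alpha=m_\alpha$ as the multiplicity of $D^\alpha$ in $D$; the hypothesis $\dim A_V[G]=|G|$ (so that all irreducibles occur) forces $k_\alpha\geq 1$, whence every $E^\alpha_{ij}$ is nonzero, and the count $\sum_\alpha d_\alpha^2=|G|=n$ gives exactly $n$ of them.

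Finally, for the reconstruction formula in (I) I would substitute the definition into $\sum_{ij\alpha}D^\alpha_{ij}(g)E^\alpha_{ij}$, pull out the $h$-sum, and recognise the inner bracket $\sum_\alpha d_\alpha\sum_{ij}D^\alpha_{ij}(g)D^\alpha_{ji}(h^{-1}) = \sum_\alpha d_\alpha\chi^\alpha(gh^{-1})$; the regular-representation completeness relation turns this into $n\,\delta_{g,h}$, leaving exactly $D(g)$. The main obstacle is less any single identity than keeping the index gymnastics under control: one must track which indices are summed and in what order, and in particular perform the $g$-sum before the $s$-sum in (III) so that the orthogonality relation is applied to a genuine sum over the full group. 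Once the substitution $s=gh$ and the partial summation are set up correctly, the three parts fall out mechanically.
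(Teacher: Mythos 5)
Your proposal is correct: the composition rule via the substitution $s=gh$ and Schur orthogonality, the adjoint identity $(E^\alpha_{ij})^\dagger=E^\alpha_{ji}$, the trace computation giving $k_\alpha=m_\alpha$, and the reconstruction of $D(g)$ from the regular-representation character all check out. The paper itself gives no proof of this theorem, deferring to standard references on representation theory; your argument is precisely the standard one those references contain, so there is nothing to contrast.
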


\begin{remark}
\label{ar1}
From point II) of the theorem it follows that the equations.%
\be
E_{ij}^{\alpha }=\frac{d^{\alpha}}{n}\sum_{g\in G}D_{ji}^{\alpha }(g^{-1})D(g)
\ee
describe transformation of orthogonalisation of operators $D(g),$ $\ g\in G$
in the space \bigskip $Hom(V)$ with the Hilbert-Schmidt scalar product.
\end{remark}

\begin{remark}
\label{ar2}
We can look at operators $E_{ij}^{\alpha}$ as a vectors in $\mathbb{C}^{n!}$
then they are orthonormal (with a coefficient $\sqrt{\frac{d^{\alpha}}{n!}}$)  with respect to the usual scalar product in $\mathbb{C}$. So in fact we have double orthogonality.
\end{remark}

The operators  $E_{ii}^{\alpha }$ are not only orthogonal
projections onto their proper subspaces in $V$ but they are also orthogonal
with respect to the Hilbert-Schmidt scalar product in the space $Hom(V)$.

The basis $\{E_{ij}^{\alpha }\}$ play essential role when $D:G\rightarrow
\mathbb{C}
\lbrack G]$ is the regular representation. In this case the properties of the
basis $\{E_{ij}^{\alpha }\}$ express the well-known fact that the group
algebra $%
\mathbb{C}
\lbrack G]$ is a direct sum of simple matrix algebras generated by the
irreducible representations of the group $G$. It is always possible to
construct the operators $E_{ij}^{\alpha }$ even if the operators $D(g)$ are
not linearly independent but in this case some of them will be zero.

From Theorem~\ref{at1} it follows directly
\begin{corollary}
\begin{widetext}
\begin{eqnarray}
E_{ij}^{\delta (12)} =\frac{d_{\delta }}{n}\sum_{g\in G}D_{ji}^{\delta
}(g^{-1})D(g)\otimes D(g)=
\sum_{\alpha \beta }\sum_{kl}\sum_{mn}\left( \frac{d_{\delta }}{n}%
\sum_{g\in G}D_{ji}^{\delta }(g^{-1})D_{lk}^{\alpha }(g)D_{nm}^{\beta
}(g)\right) E_{kl}^{\alpha }\otimes E_{mn}^{\beta},
\end{eqnarray}
\end{widetext}
thus the coefficients on $RHS$ are expressed only by the matrix elements of
irreducible representations.
\end{corollary}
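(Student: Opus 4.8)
The plan is to read the first equality as nothing more than Definition~\ref{ad1} applied to the tensor-square representation, and the second as a one-line substitution of the reconstruction formula from Theorem~\ref{at1}. First I would observe that $g \mapsto D(g)\otimes D(g)$ is again a (generally reducible) unitary representation of $G$, now acting on $V\otimes V$. The object $E_{ij}^{\delta(12)}$ is by definition the orthogonalising operator of Definition~\ref{ad1} built from this tensor-square representation together with the irreducible matrix element $D_{ji}^{\delta}(g^{-1})$, so that
\be
E_{ij}^{\delta(12)} = \frac{d_{\delta}}{n}\sum_{g\in G} D_{ji}^{\delta}(g^{-1})\, D(g)\otimes D(g),
\ee
which is exactly the first displayed equality and requires no work.

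Next I would insert the reconstruction formula from part I) of Theorem~\ref{at1}, namely $D(g) = \sum_{\alpha}\sum_{k,l} D_{kl}^{\alpha}(g)\, E_{kl}^{\alpha}$, into each of the two factors. Expanding the tensor product and labelling the summation indices of the second factor by $\beta,m,n$ gives $D(g)\otimes D(g) = \sum_{\alpha\beta}\sum_{kl}\sum_{mn} D_{kl}^{\alpha}(g) D_{mn}^{\beta}(g)\, E_{kl}^{\alpha}\otimes E_{mn}^{\beta}$. Substituting this into the display above and interchanging the finite sum over $g$ with the sums over representation labels collects the three matrix elements into a single scalar coefficient and yields the right-hand side of the statement. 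The concluding remark is then automatic: the coefficient of each basis element $E_{kl}^{\alpha}\otimes E_{mn}^{\beta}$ is $\frac{d_{\delta}}{n}\sum_{g} D_{ji}^{\delta}(g^{-1}) D_{kl}^{\alpha}(g) D_{mn}^{\beta}(g)$, which involves only matrix elements of the irreducible representations $D^{\delta},D^{\alpha},D^{\beta}$ and none of the (possibly reducible) data encoded in $D$ itself.

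There is no genuine obstacle here; as the paper states, the result follows directly once the two ingredients above are in place. The only point demanding care is index bookkeeping: one must keep the row/column placement of $D_{ji}^{\delta}(g^{-1})$, $D_{kl}^{\alpha}(g)$ and $D_{mn}^{\beta}(g)$ consistent with the conventions fixed in Definition~\ref{ad1} and in part I) of Theorem~\ref{at1}, since transposing a pair of indices changes nothing structurally but must be matched to the stated formula. It is also worth observing, though not needed for the argument, that by the Schur orthogonality relations each such triple group-average is a Clebsch--Gordan-type invariant determined purely by the decomposition of $D^{\alpha}\otimes D^{\beta}$ against $D^{\delta}$, so the resulting coefficients are finitely many fixed numbers, independent of the number of qubits.
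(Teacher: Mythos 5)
Your proof is correct and is exactly the argument the paper intends: the first equality is Definition~\ref{ad1} applied to the tensor-square representation $g\mapsto D(g)\otimes D(g)$, and the second follows by substituting the reconstruction formula of Theorem~\ref{at1}~I) into each tensor factor and exchanging the finite sums. Your index-bookkeeping caveat is well placed --- with the convention $D(g)=\sum_{ij\alpha}D_{ij}^{\alpha}(g)E_{ij}^{\alpha}$ of Theorem~\ref{at1}~I) the coefficient comes out as $\frac{d_{\delta}}{n}\sum_{g}D_{ji}^{\delta}(g^{-1})D_{kl}^{\alpha}(g)D_{mn}^{\beta}(g)$, so the transposed placement $D_{lk}^{\alpha}(g)D_{nm}^{\beta}(g)$ in the corollary's display is a (harmless) index transposition relative to that convention rather than an error in your derivation.
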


\section{Appendix B: The largest eigenvalue of $G_{\mu,t}$}
\label{App:B}
Let us recall one more time the fact present in Theorem \ref{th} (after a slight modification).
\begin{fact}
We have
\be ||G_{\mu^{*l},t}-G_{\mu_H,t}||_{\infty} = \lambda_2^l, \ee
where $\lambda_2$ is the second largest eigenvalue of $G_{\mu,t}$. Moreover the largest
eigenvalue $\lambda_1$ of $G_{\mu,t}$ is equal to $1$, and the corresponding eigenprojector
is equal to $G_{\mu_H,t}$.
\end{fact}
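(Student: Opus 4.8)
The plan is to reduce the statement to a spectral decomposition of $G_{\mu,t}$, after establishing two structural facts: that $l$-fold convolution becomes the $l$-th power of the moment operator, and that $G_{\mu_H,t}$ is precisely the spectral projector of $G_{\mu,t}$ onto its eigenvalue $1$. First I would record the multiplicativity $G_{\mu^{\star l},t}=(G_{\mu,t})^{l}$. Sampling one convolution step as a product $U=U_1U_2$ of independent $\mu$-distributed unitaries and using the factorisation $(U_1U_2)^{\ot t}\ot((U_1U_2)^{\star})^{\ot t}=\big(U_1^{\ot t}\ot(U_1^{\star})^{\ot t}\big)\big(U_2^{\ot t}\ot(U_2^{\star})^{\ot t}\big)$, the two integrations separate, giving $G_{\mu^{\star 2},t}=(G_{\mu,t})^{2}$ and then the general claim by induction on $l$.

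Next I would establish the spectral structure of $G_{\mu,t}$. Since $\mu=\frac1n\sum_i\mu_H(i,i+1)$ is invariant under $U\mapsto U^{-1}$, a short change of variables shows that $G_{\mu,t}$ is Hermitian with respect to the Hilbert--Schmidt structure. Moreover each local term $\mathcal{P}_{i,i+1}$ is an orthogonal projector (Hermitian, and idempotent by left-invariance of the block Haar measure), so $G_{\mu,t}$ is a convex average of orthogonal projectors and therefore has all eigenvalues in $[0,1]$. The same left-invariance argument shows $G_{\mu_H,t}$ is itself an orthogonal projector and that $G_{\mu,t}\,G_{\mu_H,t}=G_{\mu_H,t}$; hence $\operatorname{range}(G_{\mu_H,t})$ lies inside the eigenvalue-$1$ eigenspace, and in particular $\lambda_1=1$.

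To show the eigenprojector is \emph{exactly} $G_{\mu_H,t}$ I would prove the reverse inclusion. For an average of projectors, $\langle\psi|G_{\mu,t}|\psi\rangle=\|\psi\|^2$ forces $\mathcal{P}_{i,i+1}\psi=\psi$ for every $i$, so the top eigenspace is the common fixed space of all local Haar twirls. By Schur--Weyl duality the operators fixed by a single block twirl are exactly those commuting with $U_{i,i+1}^{\ot t}$; invoking universality of nearest-neighbour two-qubit gates, the group they generate is dense in $\mathbb{U}(2^n)$ (modulo global phase), so any common fixed operator commutes with $U^{\ot t}$ for all $U\in\mathbb{U}(2^n)$ and thus lies in $\operatorname{range}(G_{\mu_H,t})$. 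This yields equality, identifying the eigenprojector for $\lambda_1=1$ with $G_{\mu_H,t}$.

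Finally, writing the spectral decomposition $G_{\mu,t}=G_{\mu_H,t}+\sum_{k\ge2}\lambda_k\Pi_k$ with $0\le\lambda_k<1$ and orthogonal $\Pi_k$, and combining with the first step, $G_{\mu^{\star l},t}-G_{\mu_H,t}=\sum_{k\ge2}\lambda_k^{\,l}\Pi_k$, a Hermitian operator whose spectral norm is $\max_{k\ge2}\lambda_k^{\,l}=\lambda_2^{\,l}$. The main obstacle is the third step: the reverse inclusion relies on the universality of nearest-neighbour two-qubit gates, which is precisely what forces the fixed points of the local walk to coincide with the global Haar invariants; everything else is bookkeeping with Haar invariance and the spectral theorem.
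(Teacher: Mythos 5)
Your proof is correct, and it takes a genuinely different route from the one in Appendix B of the paper. The paper's argument is a limit argument: it takes for granted that $\lim_{l\to\infty}G_{\mu^{\star l},t}=G_{\mu_H,t}$, expands $G_{\mu^{\star l},t}=\sum_w v_w^l Q_w$ in the spectral basis, and reads off that the eigenvalue-$1$ eigenprojector must be $G_{\mu_H,t}$ (and it only explicitly proves this second half of the Fact, leaving the multiplicativity $G_{\mu^{\star l},t}=(G_{\mu,t})^l$ and the norm identity implicit). You instead identify the top eigenspace \emph{directly}: you show $G_{\mu,t}$ is a convex average of orthogonal projectors, that $\operatorname{range}(G_{\mu_H,t})$ sits inside the fixed space, and then close the reverse inclusion by characterizing the common fixed points of the local twirls via Schur--Weyl duality plus universality of nearest-neighbour two-qubit gates. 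This buys you a self-contained proof: the paper's version is circular-adjacent in that the assumed convergence $G_{\mu^{\star l},t}\to G_{\mu_H,t}$ is essentially equivalent to the statement that the eigenvalue-$1$ space is exactly the Haar-invariant space together with $\lambda_2<1$, which is what your universality step actually establishes. Your argument also supplies the pieces the paper glosses over (the convolution-to-power step, Hermiticity and mutual orthogonality of the $Q_w$, and the explicit derivation of $\|G_{\mu^{\star l},t}-G_{\mu_H,t}\|_\infty=\max_{k\ge 2}\lambda_k^l=\lambda_2^l$ from positivity of the spectrum). The only point worth tightening is the Schur--Weyl/universality step: one should note that the fixed points of a single block twirl are operators commuting with $U_{i,i+1}^{\ot t}\ot I$ for all block unitaries, and that global phases are harmless because they cancel in $U^{\ot t}X(U^{\dagger})^{\ot t}$, so density of the generated group in $\mathbb{U}(2^n)$ modulo phase suffices; you flag both points, so this is a presentational rather than a substantive gap.
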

Now, we will prove that the largest eigenvalue of $G_{\mu,t}$ is equal to $1$.
\begin{proof}
We know that $G_{\mu,t}$ is a operator such that $0 \leq G_{\mu,t} \leq I$. Having this in mind, let us eigen-decompose operator $G_{\mu,t}$ in some basis as
\be G_{\mu,t} = \frac{1}{n} \sum_{i=1}^{n} P_{i,i+1} = \sum_w v_w Q_w, \label{eig} \ee
where $v_w$ are eigenvalues, so they follow the constraint $0 \leq v_w \leq 1$, and $Q_w$ are the corresponding eigenvectors. Now, let us extend Eq. \eqref{eig} to $l$ walks in the random walk:
\be G_{\mu^{*l},t} = (\sum_w v_w Q_w)^l = \sum_w v_w^l Q_w, \ee
since $Q_w$ are projectors. We also have the following
\be \mathop{\operatorname{lim}}\limits_{l \rightarrow \infty} G_{\mu^{*l},t} = G_{\mu_H,t}. \ee
Then, it is easy to observe that the correspondence is valid only when $v_w = 1$ and the eigenprojector $Q_w$ is equal to $G_{\mu_H,t}$.
\end{proof}

\section{Appendix C: Superoperators and operators}
\label{App:C}
Here, we explore some connections between superoperators and operators. For a superoperator ${\cal G}$ given by
\be {\cal G}(X) = \sum_{k} A_k X B_{k}^{\dag}, \label{eq1} \ee
where $\dag$ denotes the Hermitian conjugate; we define the operator
\be G = \sum_k A_k \ot B_{k}^{\star}, \label{G}\ee
with $\star$ being the complex conjugate.

Let $X$ be a normalized operator, such that $\tr(X X^{\dag}) = 1$. What is more, assume that ${\cal G}(X) = \lambda X$, for a complex eigenvalue $\lambda$, i.e., $X$ is an eigenoperator of ${\cal G}$ with eigenvalue $\lambda$. Then defining $|X\> = X \ot \II |\phi\>$, with
\be |\phi\> = \sum_k |k\> \ot |k\>, \ee
it holds that $G|X\> = \lambda |X\>$, i.e., $|X\>$ is an eigenvector of $G$ with eigenvalue $\lambda$.

A direct implication of this correspondence is that
\begin{fact}
\label{fact:Ginf}
\be ||{\cal G}||_{2 \rightarrow 2} = ||G||_{\infty}, \ee
where the norm $2 \rightarrow 2$ is defined as in Def. \ref{def2}.
\end{fact}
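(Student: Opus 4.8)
The plan is to exploit the vectorization map $X\mapsto\ket{X}=(X\ot\II)\ket{\phi}$ already introduced above, and to show that it is an isometric \emph{bijection} that intertwines the superoperator $\mathcal{G}$ with the operator $G$. Once this is in place, the two quantities in the Fact are nothing but the largest singular value of one and the same linear map read off in two isometric pictures, and their equality is immediate.

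First I would verify that vectorization is an isometry from the space of operators equipped with the Hilbert-Schmidt norm onto $\mathcal{H}\ot\mathcal{H}$ equipped with the Euclidean norm. The key computation is
$$\braket{X}{Y}=\bra{\phi}(X^{\dag}Y\ot\II)\ket{\phi}=\tr(X^{\dag}Y),$$
where the last step uses $\bra{\phi}(M\ot\II)\ket{\phi}=\tr(M)$, a direct consequence of $\ket{\phi}=\sum_k\ket{k}\ot\ket{k}$. In particular $\|\ket{X}\|^2=\tr(X^{\dag}X)=\|X\|_2^2$. Since the map is linear and injective between spaces of equal dimension, it is a bijective isometry.

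Second I would upgrade the eigenvector correspondence stated above to the full intertwining relation $\ket{\mathcal{G}(X)}=G\ket{X}$ valid for \emph{every} operator $X$. Starting from $G\ket{X}=\sum_k(A_kX\ot B_k^{\star})\ket{\phi}$, I would invoke the elementary identity $(A\ot B)\ket{\phi}=(AB^{T}\ot\II)\ket{\phi}$ together with $(B_k^{\star})^{T}=B_k^{\dag}$ to collapse the right tensor factor onto the left slot, obtaining $\sum_k(A_kXB_k^{\dag}\ot\II)\ket{\phi}=\ket{\sum_k A_kXB_k^{\dag}}=\ket{\mathcal{G}(X)}$.

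Finally, combining the two facts, the supremum defining the induced norm transforms cleanly,
$$\|\mathcal{G}\|_{2\to2}=\sup_{X\neq0}\frac{\|\mathcal{G}(X)\|_2}{\|X\|_2}=\sup_{\ket{X}\neq0}\frac{\|G\ket{X}\|}{\|\ket{X}\|}=\|G\|_{\infty},$$
where the middle equality uses the isometry and the intertwining, and the last equality is the standard characterization of the Schatten $\infty$-norm (the largest singular value) as the operator norm. The only point requiring care — more a subtlety than a genuine obstacle — is surjectivity of the vectorization: one must know that ranging $X$ over all operators is the same as ranging $\ket{X}$ over all vectors, for otherwise the middle supremum would only be a lower bound. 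This is exactly where injectivity together with the equal-dimension count is used, guaranteeing that the two suprema are taken over identical sets.
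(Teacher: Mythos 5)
Your proof is correct, but it takes a more careful route than the paper's own argument. The paper's proof expands $X=\sum_k c_k X_k$ in a set of eigenoperators of the superoperator, computes the ratio $\|\Lambda(X)\|_2^2/\|X\|_2^2=\sum_k|c_k|^2|\lambda_k|^2/\sum_k|c_k|^2$, and reads off $\lambda_{\max}^2$ as the supremum; this implicitly assumes that $\cal G$ admits a complete set of Hilbert--Schmidt--orthogonal eigenoperators (i.e.\ that it is normal) and that $\|G\|_\infty$ equals the largest eigenvalue modulus rather than the largest singular value. Both assumptions happen to hold for the Hermitian operator $G_{\mu,t}=\frac{1}{n}\sum_i P_{i,i+1}$ actually used in the paper, but they are not justified for a general map of the form \eqref{eq1}. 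You instead promote the eigenvector correspondence $G\ket{X}=\lambda\ket{X}$ stated before the Fact to the full intertwining identity $\ket{{\cal G}(X)}=G\ket{X}$ for all $X$, and combine it with the fact that $X\mapsto(X\ot\II)\ket{\phi}$ is an isometric bijection from Hilbert--Schmidt space onto $\cH\ot\cH$; this exhibits $\cal G$ and $G$ as unitarily equivalent linear maps, so their induced norms (largest singular values) coincide with no normality assumption. Your computations $(A\ot B)\ket{\phi}=(AB^{T}\ot\II)\ket{\phi}$, $(B_k^{\star})^{T}=B_k^{\dag}$, and $\braket{X}{Y}=\tr(X^{\dag}Y)$ are all correct, and your attention to surjectivity of the vectorization (so that the two suprema range over the same set) is exactly the point that makes the middle equality an equality rather than a one-sided bound. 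In short: same underlying vectorization idea, but your version is strictly more general and closes the gaps the paper leaves open; what the paper's version buys is only brevity in the special normal case.
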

\begin{proof}
Using Eq. \eqref{eq1} and \eqref{G} we have that
\be ||\Lambda(X)||_{2\rightarrow2} = sup\frac{|| \Lambda(X) ||_2}{||X||_2}. \ee

Decomposing $X = \sum_k c_k X_k$, $\Lambda(X_k) = \lambda_k X_k$ and noting that $\sum_k |c_k|^2 = 1$,
\be
||\Lambda||_{2\rightarrow2}^2 = sup \frac{\sum_k |c_k|^2 |\lambda_k|^2 X_k^2}{\sum_k c_k^2 X_k^2} = \lambda_{max}^2.
\label{proof}
\ee
Recalling the definition of the norm $||.||_{\infty}$, one can observe that Eq. \eqref{proof} proofs Fact \ref{fact:Ginf}.
\end{proof}

\bibliographystyle{apsrev}
\bibliography{rmp15-hugekey}
\end{document}